\documentclass[12pt]{article}
\usepackage{amsmath}
\usepackage{amssymb}
\usepackage{amsthm}
\usepackage{graphicx}
\usepackage{natbib}
\usepackage{booktabs}
\usepackage{url}
\usepackage{tikz}
\newcommand{\Tv}{\sigma^{2}_{y}}
\newcommand{\Rv}{\sigma^{2}_{y|x}}

\graphicspath{{figs/}{./}}

\theoremstyle{plain}
\newtheorem{theorem}{Theorem}[section]
\newtheorem{corollary}{Corollary}[section]
\newtheorem{lemma}{Lemma}[section]
\theoremstyle{definition}
\newtheorem{definition}{Definition}[section]

\newcommand{\blind}{0}

\makeatletter
\renewcommand{\@makecaption}[2]{%
  \vskip\abovecaptionskip
  \begingroup\renewcommand{\baselinestretch}{1}\normalsize\selectfont
  \sbox\@tempboxa{\textbf{#1.} #2}%
  \ifdim \wd\@tempboxa >\hsize \textbf{#1.} #2\par \else
    \global\@minipagefalse \hb@xt@\hsize{\hfil\box\@tempboxa\hfil}\fi
  \endgroup \vskip\belowcaptionskip}
\makeatother

\begin{document}

\def\spacingset#1{\renewcommand{\baselinestretch}%
{#1}\small\normalsize} \spacingset{1}

\if1\blind
{
  \title{\bf Designing for Prediction: Planning Formulas for Expected Out-of-Sample $R^2$ and $r$}
  \author{}
  \date{}
  \maketitle
} \fi

\if0\blind
{
  \title{\bf Designing for Prediction: Planning Formulas for Expected Out-of-Sample $R^2$ and $r$}
  \author{Elizabeth Tipton\thanks{Department of Statistics and Data Science, Northwestern University. }\hspace{.2cm}\\
  }
  \date{October 13, 2026}
  \maketitle
} \fi

\thispagestyle{empty}
\vfill

\newpage
\begin{abstract}
\noindent Design calculations let researchers gauge, before data is collected, whether a planned study is likely to detect an effect. In this paper, we develop such calculations for predictive models using ordinary least squares and ridge regression under leave-one-out cross-validation. To do so, we derive closed-form expressions relating the expected out-of-sample $R^2$ and the expected correlation between observed and predicted outcomes to four design-stage quantities: sample size, number of predictors, an assumed population $R^2$ ($\rho^2_{in}$) and a design efficiency factor $D^*$ (which summarizes how leverage is distributed across observations). Unlike the classical shrinkage formulas, which depend only on sample size and dimension, these are sensitive to how covariates are arranged --- to imbalance and skew. A simulation study assesses the accuracy of the formulas against realized values under a variety of data generating mechanisms. We apply the methods to a published study predicting personality from smartphone behavior and show that the design formulas would have indicated, before any analyses were run, that least squares could not be expected to beat the sample mean while ridge could. An R package, \texttt{predpower}, implements the calculations.
\end{abstract}

\noindent%
{\it Keywords:} prediction, design diagnostics, cross-validation, sample-size planning, out-of-sample correlation, ridge regression
\vfill

\newpage
\spacingset{1.7}

\section{Introduction}
\label{sec:intro}

Social scientists are increasingly turning from hypothesis testing to predictive models to answer their research questions. For example, \citet{hofman2021} argued in \textit{Nature} that computational social science needs to treat prediction as a goal in its own right, while \citet{almaatouq2024} called for experiments in the behavioral sciences to be designed with integrative, cumulative prediction in mind. These calls follow from evidence that effects vary across settings and populations \citep{bryan2021,yarkoni2022}, resulting in a shift from an over-reliance on a single average estimate towards building models to understand moderation and variation --- and ultimately, to test how well these models serve in predicting individual outcomes. Yet researchers often find the results of these models to be disappointing. For example, when 160 teams applied machine-learning methods to a high-quality birth cohort study to predict six life outcomes, they found that the best predictions were not only inaccurate but also performed only slightly better than a simple benchmark \citep{salganik2020}.

When predictions fail, it is difficult for researchers to know why: is it the outcome, the predictors, the method or the sample? In hypothesis testing, power analysis provides a tool that mitigates this situation, allowing researchers to understand, in advance, if their proposed study is likely to answer the identified question. While design calculations of this kind exist for prediction, they do not focus explicitly on the questions social scientists face. For example, there are sample-size formulas for clinical prediction models \citep{riley2019a,riley2019b} and for polygenic scores \citep{daetwyler2008,dudbridge2013}, but not for the least squares and penalized regression models commonly used by social scientists. Perhaps the closest design analysis for social science is focused on randomized trials \citep{mamakos2025}, but this paper only focuses on a specific use of least squares, not the more general case.

With this shift towards prediction, the statistics used for judging success have also shifted --- from $p$-values and (in-sample) $R^2$ to out-of-sample $R^2$, the correlation between predicted and observed outcomes, AUC and F1, often from cross-validation. These out-of-sample statistics are familiar in name and form to their in-sample counterparts, but they answer a different question. An adjusted in-sample $R^2$ estimates a population quantity, the share of outcome variance the predictors explain. An out-of-sample $R^2$ measures how much of that share a particular study recovers and so depends on the sample size, the number of predictors and the design as well as on the strength of the relationship. For example, \citet{stachl2020} asked whether personality can be predicted from 1{,}821 smartphone behavioral covariates on 624 participants. They evaluated their models out of sample, reported out-of-sample correlations of $0.20$ to $0.40$ and gauged these against effect-size benchmarks from personality psychology. But such benchmarks describe relationships between variables, not how much of a relationship a study of a given size and design can recover and so cannot tell whether a modest out-of-sample value reflects a weak relationship or a limited design.

In this paper, we provide analogous design calculations for use with least squares and penalized (ridge) regression. We focus on two quantities --- the out-of-sample $R^2$ ($R_{oos}^2$) and the out-of-sample correlation ($r_{oos}$) --- and provide formulas for the expected values of these statistics over the sampling distribution of the outcomes under leave-one-out cross-validation (LOOCV). We focus on ridge because it provides analytic tractability with closed-form results. We show that these expected values depend upon parameters that are familiar: the sample size $n$, the number of predictors $p$ and an assumed effect size --- here, the population $R^2$ ($\rho^2_{in}$). The fourth is new: we show that out-of-sample accuracy also depends on how the covariates are arranged across observations --- on collinearity, on imbalance and on leverage. We summarize that arrangement in a single design efficiency factor $D^* \in (0,1]$, computable from the covariates alone. We show that it can be as consequential as the sample size, and unlike the sample size it is frequently improvable, simply by transforming the predictors. We implement these calculations in an R package, \texttt{predpower}, which computes the expected out-of-sample statistics, either from a covariate matrix or from values assumed at the design stage\if0\blind{}; it is available at \url{https://github.com/marianolake/predpower}\else{}; it is provided with the Supplementary Material\fi.

\subsection{Existing approaches}
\label{sec:existing}

Work on the out-of-sample accuracy of regression spans psychometrics, statistics, clinical prediction and genomics, and approaches differ chiefly in two respects: what they predict and what they assume about the covariates. In this paper, we focus on the out-of-sample error, estimated via leave-one-out cross validation and summarized using the statistics $R^2_{oos}$ and $r_{oos}$. We condition on the design matrix $X$ in hand, at finite $n$ and $p$, with no assumption about its distribution. Together these make the calculations available at the design stage, since an expectation over the outcome given $X$ requires the covariates, or a planned version of them, but no outcome. While there is existing work in these two areas, in combination, none address our case. 

\emph{What is predicted.} The classical shrinkage literature targets one of two quantities. \citet{stein1960}, \citet{lord1950}, \citet{nicholson1960}, \citet{darlington1968} and \citet{browne1975} (reviewed in \citealp{yin2001}) give the expected accuracy of a fitted equation for a new case drawn from the population. \citet{efron1986,efron2004} and \citet{mallows1973} instead keep the design points fixed and draw a fresh outcome at each, so that no case is ever new; this refit-and-replicate target is often called in-sample error, and \citet{vanhouwelingen1990} distinguish it and related targets from the actual error of a fitted model on new cases. A second line of work studies the out-of-sample statistic itself, but as something to be estimated from data: \citet{bates2024} show that what cross-validation estimates is closer to the average error of the procedure than to the error of the specific fitted model, and \citet{hawinkel2024} give an unbiased estimator of the out-of-sample $R^2$ with a standard error. Both require an observed outcome. We target the expectation of the leave-one-out statistic, which does not. The closest precedent is \citet{zevallos2026}, who derive the expectation of the out-of-sample $R^2$ under rolling- and expanding-window re-estimation by the same delta-method route we use, but for time-ordered forecasting rather than cross-sectional prediction.

\emph{What is assumed about the covariates.} In the classical shrinkage literature, the formulas assume the predictors are multivariate normal, while in the refit-and-replicate literature, the predictors are treated as fixed, with cases never held out. Existing sample-size methods built on these results, for clinical prediction models \citep{copas1983,copas1987,riley2019a,riley2019b} and for polygenic scores \citep{daetwyler2008,dudbridge2013}, summarize the design by $n$, $p$ and an assumed signal strength; the clinical methods target the calibration slope of \citet{copas1983}, the factor by which fitted coefficients must be shrunk to predict new cases without systematic over- or under-prediction. A large asymptotic literature lets $n$ and $p$ grow together and characterizes the covariates by their population distribution \citep{dobriban2018,elkaroui2018,rosset2020,atanasov2025,bellec2025}, and \citet{mamakos2025} characterize them by how the moments of the sampled population differ from those of a target population. Importantly, these assumptions are consequential: averaging over a normal population or redrawing outcomes at fixed design points removes any dependence on how the covariates are arranged across observations; the resulting existing formulas thus depend on the design only through $n$ and $p$. In this paper, we show that holding a case out of a fixed design, as we do, also depends on the arrangement of the covariates, and Section~\ref{sec:ols} shows how.

\subsection{Overview of the paper}

Throughout this paper, we assume one outcome per unit, homoscedastic errors and no clustering. In Section~\ref{sec:ols}, we develop the design formulas for ordinary least squares and show how they differ from existing formulas. Section~\ref{sec:ridge} extends them to ridge regression. Section~\ref{sec:simulation} provides a simulation study that assesses the accuracy of these formulas under the required assumptions and away from them and compares them with the lasso. In Section~\ref{sec:illustration}, we illustrate the full framework on a published prediction study. We close by discussing what these formulas imply for applied practice and how they relate to other recent work. 

\section{Design Formulas Under Ordinary Least Squares}
\label{sec:ols}

Consider $i = 1, \dots, n$ independent observations following the ordinary regression model
\[
y_i = x_i'\beta + \epsilon_i, \qquad \epsilon_i \sim (0, \Rv),
\]
where $x_i$ is a row of $p+1$ entries --- an intercept and $p$ covariates --- and $\beta$ is the coefficient vector. Throughout, the $p$ covariates are standardized to mean zero and variance one. Stack the rows into the $n \times (p+1)$ design matrix $X$; write $X=[\mathbf{1}\;\; Z]$, with $Z$ the standardized covariate block and $\beta_z$ its coefficients. The ordinary least-squares estimator is $\hat\beta = (X'X)^{-1}X'y$, with $\mathrm{Var}(\hat\beta) = \Rv(X'X)^{-1}$, and the prediction for unit $i$ when that unit is held out of the fit is $\hat y_{i|-i} = x_i'\hat\beta_{-i}$. $E[R^2_{oos}]$ and $E[r_{oos}]$ below, and every formula in Sections~\ref{sec:ols} and~\ref{sec:ridge}, are properties of these held-out predictions.

In-sample performance is summarized by $\rho^2_{in} = \mathrm{Var}(x_i'\beta)/\mathrm{Var}(x_i'\beta + \epsilon_i) = \beta'\Sigma\beta/\Tv$, the share of population outcome variance carried by the systematic part, with $\Sigma$ the population covariance of $x_i$; its sample analogue, which we call $R^2_{in}$, is the well-known (yet biased) estimator, commonly reported in model statistics. We develop closed-form design formulas for two out-of-sample statistics, the out-of-sample coefficient of determination $R^2_{oos}$ and the out-of-sample correlation $r_{oos}$, defined in Section~\ref{sec:loo}. We show below that the familiar in-sample relation $r_{in}^2 = R^2_{in}$ does \emph{not} survive out of sample.

Write $\Tv := \mathrm{Var}(y_i) = \mathrm{Var}(x_i'\beta) + \Rv$ for the unconditional variance of the outcome, so that $\rho^2_{in} = \mathrm{Var}(x_i'\beta)/\Tv$ and $\Rv = (1-\rho^2_{in})\Tv$; if the outcome is standardized, $\Tv=1$. Note that $\Tv$ is a scale constant rather than a design input: because $\Rv = (1-\rho^2_{in})\Tv$, it cancels from the formulas below, which depend on the outcome only through the ratio $D/\Tv$. Here $p$ excludes the intercept.

\subsection{The leave-one-out statistics}
\label{sec:loo}

Both statistics are computed from the $n$ held-out predictions $\hat y_{i|-i}$. With $\bar{\hat y}$ the mean of these predictions,
\begin{equation}
R^2_{oos} = 1-\frac{\sum_i (y_i-\hat y_{i|-i})^2}{\sum_i (y_i-\bar y)^2},
\qquad
r_{oos} = \frac{\sum_i (y_i-\bar y)(\hat y_{i|-i}-\bar{\hat y})}{\sqrt{\sum_i (y_i-\bar y)^2\sum_i (\hat y_{i|-i}-\bar{\hat y})^2}}.
\label{eq:estimators}
\end{equation}
The first compares the held-out prediction errors with those of predicting the training mean: zero means the model does no better than the mean, and negative values are possible. The second is the correlation between predicted and observed values.

In practice $R^2_{oos}$ and $r_{oos}$ are computed by $k$-fold cross-validation, with leave-one-out cross-validation (LOOCV) the special case $k=n$. We focus on LOOCV for two reasons. First, it is the case classical cross-validation theory already treats exactly \citep{stone1974,allen1974}. Second, and more practically, it is the most tractable case. Removing a single observation from a fitted model has an exact algebraic shortcut, the PRESS identity: with $h_i = x_i'(X'X)^{-1}x_i$ the familiar hat-matrix leverage of observation $i$ and $e_i = y_i - x_i'\hat\beta$ its in-sample residual,
\begin{equation}
y_i - \hat y_{i|-i} \;=\; \frac{e_i}{1-h_i},
\qquad\text{equivalently}\qquad
\hat y_{i|-i} \;=\; y_i - \frac{e_i}{1-h_i},
\label{eq:press}
\end{equation}
exactly, for any design \citep{allen1974}. All $n$ held-out predictions are therefore available from a single fit, and each held-out error depends on the design only through that observation's leverage; this is what allows the moments below to be written in closed form from the design, before any outcome is observed. The same shortcut extends to $k$-fold cross-validation by deleting blocks of observations, but the result then depends on the particular random partition into folds, which leave-one-out avoids.

\begin{definition}[Leave-one-out coefficient of determination and reliability]
\label{def:Sloo}
With $\bar{\hat y}$ the mean of the held-out predictions, write
\[
S_{yy} := \sum_i (y_i-\bar y)^2, \qquad
\mathrm{PRESS} := \sum_i (y_i - \hat y_{i|-i})^2,
\]
\[
S_{\hat y\hat y} := \sum_i (\hat y_{i|-i}-\bar{\hat y})^2, \qquad
S_{y\hat y} := \sum_i (y_i-\bar y)(\hat y_{i|-i}-\bar{\hat y}).
\]
Here $\mathrm{PRESS}$ is the prediction sum of squares of \citet{allen1974}, which~\eqref{eq:press} computes from a single fit.
The expectations of the two leave-one-out statistics are
\[
E\big[R^2_{oos}\big] \;=\; 1-E\!\left[\frac{\mathrm{PRESS}}{S_{yy}}\right], \qquad
E\big[r_{oos}\big]\;=\;E\!\left[\frac{S_{y\hat y}}{\sqrt{S_{yy}S_{\hat y\hat y}}}\right],
\]
expectations taken over the outcome with $X$ held fixed; Supplement~A.13 gives the corresponding results when the rows of $X$ are themselves sampled. Throughout, $R^2_{oos}$ and $r_{oos}$ denote the statistics of~\eqref{eq:estimators}, computed from data, and the formulas below give their expectations, written $E[R^2_{oos}]$ and $E[r_{oos}]$ --- design-stage quantities that can be calculated before any outcome is observed.
\end{definition}

\subsection{The role of study design}
\label{sec:design}

The denominator $S_{yy}$ of $R^2_{oos}$ does not involve the held-out predictions at all, so whatever the design contributes enters through $\mathrm{PRESS}$, the sum of squared held-out errors. Taking its expectation one observation at a time, each term decomposes exactly into two pieces:
\[
E[\mathrm{PRESS}] = \Rv + x_i'\,\mathrm{Var}(\hat\beta_{-i})\,x_i,
\]
because the held-out sampling error is independent of $\hat\beta_{-i}$. The first piece is irreducible; the second is affected by the design of the study. By the Sherman--Morrison identity \citep{sherman1950} --- the same algebra that gives the PRESS identity~\eqref{eq:press} --- removing one row from $X'X$ has a closed form:
\[
x_i'(X'X - x_ix_i')^{-1}x_i = \frac{h_i}{1-h_i},
\]
with $h_i$ the leverage of Section~\ref{sec:loo}. Averaging this over the $n$ observations gives the design prediction penalty.

\begin{definition}[Design prediction penalty]
\label{def:D}
Let $D$ be the average, over the $n$ observations, of how much estimation error in $\hat\beta$ costs a held-out prediction, where
\[
D = \frac{\Rv}{n}\sum_i \frac{h_i}{1-h_i}.
\]
\end{definition}

\noindent Summing the decomposition above over observations then gives $E[\mathrm{PRESS}] = n(\Rv + D)$.

A basic identity of OLS is that the leverages sum to the number of columns, $\sum_i h_i = p+1$, which is exact for any design however collinear or imbalanced. If leverage were spread perfectly evenly --- $h_i = (p+1)/n$ for every $i$ --- the penalty would take the closed form
\[
D_{even} = \Rv\cdot\frac{p+1}{n-p-1}.
\]
Because $h/(1-h)$ is strictly convex in leverage, Jensen's inequality guarantees $D \ge D_{even}$ whenever leverage is unevenly spread, with equality only in the perfectly balanced case. Both penalties have familiar sample analogues: PRESS estimates $n(\Rv+D)$, and generalized cross-validation \citep{craven1979,golub1979} replaces each leverage with the average, in effect estimating with $D_{even}$.

\begin{definition}[Design efficiency factor]
\label{def:Dstar} Let $D^*$ be the design efficiency factor, where
\[
D^* = D_{even}/D \in (0,1].
\]
$D^*$ depends only on the leverages $h_i$, on $p+1$ and on $n$. It equals $1$ exactly when leverage is balanced and falls below $1$ by exactly what \emph{uneven leverage} costs relative to that ceiling.
\end{definition}

What drives $D^*$ is whether a few observations sit far from the rest. In the Supplement we provide an exploration across different design matrix patterns, holding $n = 100$ and $p = 10$ (Supplement~B, Table~B4). Imbalance across categories costs little while every category is reasonably large --- an imbalanced $90/5/5$ category costs a little ($0.949$) --- but the cost grows quickly as the smallest category shrinks. For example, a $96/2/2$ split shifts $D^*$ even further ($0.843$), and a category with a single member makes leave-one-out prediction undefined. Skewed covariates produce the same configuration more reliably, costing a quarter of the design's efficiency on average ($0.749$) and pushing maximum leverage from $0.11$ to $0.68$. Collinearity with Gaussian covariates results in $D^*$ at $0.979$, but combined with skewed covariates it falls to $0.549$. Skew is the mechanism behind Section~\ref{sec:illustration}, where $74\%$ of candidate predictors have absolute skewness above $2$ on their observed values and realized $D^*$ falls to $0.18$.

\subsection{Moments of the leave-one-out statistics}
\label{sec:moments}

Our goal is design-oriented formulas for $E[R^2_{oos}]$ and $E[r_{oos}]$, and these follow from the moments of the four sums of Definition~\ref{def:Sloo}. The intercept drops out of everything below, because $X'X$ is block diagonal when the covariates are centered.

\begin{lemma}[Moments of the held-out cross-products]
\label{lem:moments}
Fix the design $X$ and the coefficient vector $\beta$, and let $\tilde\rho^2_{in} := \beta_z'Z'Z\beta_z/[(n-1)\Tv]$ be the realized in-sample share of that design. Writing $S_{y\hat y}$, $S_{yy}$, $S_{\hat y\hat y}$ and $\mathrm{PRESS}$ for the sums of \eqref{eq:estimators}, $\bar r =\bar{y} - \bar{\hat{y}}$ and taking expectations over the outcome alone,
\[
E[S_{yy}] = (n-1)\Tv,
\qquad
E[S_{y\hat y}] = \Tv\,(n\tilde\rho^2_{in}-1),
\]
\[
E[S_{\hat y\hat y}] = \Tv\,(n\tilde\rho^2_{in}-1) + nD - n\,E[\bar r^{\,2}],
\qquad
E[\mathrm{PRESS}] = n\big(\Rv+D\big).
\]
\end{lemma}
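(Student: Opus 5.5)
The plan is to express every sum in Definition~\ref{def:Sloo} through the held-out errors $u_i := y_i - \hat y_{i|-i}$. By the PRESS identity~\eqref{eq:press}, $u_i = e_i/(1-h_i)$. Write $y = X\beta + \varepsilon$, let $H$ be the hat matrix and $C = I - n^{-1}\mathbf{1}\mathbf{1}'$ the centering matrix. Then $e = (I-H)\varepsilon$, so each $u_i$ is a linear function of $\varepsilon$ with mean zero, and $E[u_i]=0$ for every $i$.

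Next I would record two algebraic facts. First, because $\hat y_{i|-i} = y_i - u_i$, the centered predictions are $(y_i-\bar y) - (u_i - \bar u)$, and the mean held-out error is $\bar u = \bar y - \bar{\hat y} = \bar r$. Second, $\sum_i (y_i-\bar y)\bar u = 0$. Together these give
\[
S_{y\hat y} = S_{yy} - \sum_i (y_i-\bar y)u_i,
\qquad
S_{\hat y\hat y} = S_{yy} - 2\sum_i (y_i-\bar y)u_i + \mathrm{PRESS} - n\bar r^{\,2},
\]
using $\sum_i (u_i-\bar u)^2 = \mathrm{PRESS} - n\bar u^2$. All four expectations then reduce to three quantities: $E[S_{yy}]$, $E[\mathrm{PRESS}]$ and the cross term $E\big[\sum_i (y_i-\bar y)u_i\big]$.

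\emph{$E[S_{yy}]$.} Since $y-\bar y\mathbf{1} = CZ\beta_z + C\varepsilon$ and $Z$ is centered,
\[
E[S_{yy}] = \beta_z'Z'Z\beta_z + (n-1)\Rv = (n-1)\Tv\big[\tilde\rho^2_{in} + 1-\rho^2_{in}\big].
\]
This equals $(n-1)\Tv$ once the realized share $\tilde\rho^2_{in}$ is identified with $\rho^2_{in}$, as is done at the design stage. I would state this identification explicitly, since it is the only place the realized and population shares are conflated. The same identification is used later, where $(n-1)\Tv - n\Rv$ is rewritten as $\Tv(n\tilde\rho^2_{in}-1)$.

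\emph{$E[\mathrm{PRESS}]$.} This is the decomposition already given before Definition~\ref{def:D}. Write $u_i = \varepsilon_i - x_i'(\hat\beta_{-i}-\beta)$; the term $\varepsilon_i$ is independent of $\hat\beta_{-i}$. Apply Sherman--Morrison to get $x_i'\mathrm{Var}(\hat\beta_{-i})x_i = \Rv h_i/(1-h_i)$, and sum over $i$ to obtain $n(\Rv + D)$.

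\emph{The cross term.} I expect this step to be the only real obstacle. The fixed part $Z\beta_z$ of $y$ contributes nothing, because $E[u_i]=0$. The random part contributes $\sum_i \mathrm{Cov}\big((C\varepsilon)_i, ((I-H)\varepsilon)_i\big)/(1-h_i)$. The key observation is that $\mathbf{1}$ lies in the column space of $X$, so $H\mathbf{1}=\mathbf{1}$ and hence $(I-H)C = I-H$. Each covariance is then $\Rv[(I-H)C]_{ii} = \Rv(1-h_i)$, which cancels the denominator exactly. The cross term therefore has expectation $n\Rv$, independent of the design.

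Substituting these three pieces gives the stated moments:
\[
E[S_{y\hat y}] = (n-1)\Tv - n\Rv = \Tv(n\tilde\rho^2_{in}-1),
\]
\[
E[S_{\hat y\hat y}] = (n-1)\Tv - 2n\Rv + n(\Rv+D) - nE[\bar r^{\,2}] = \Tv(n\tilde\rho^2_{in}-1) + nD - nE[\bar r^{\,2}].
\]
The term $E[\bar r^{\,2}]$ is left unevaluated, as in the statement.
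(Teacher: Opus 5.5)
Your proof is correct and follows essentially the paper's route: you rewrite $S_{y\hat y}$ and $S_{\hat y\hat y}$ through the held-out errors, which gives exactly the relation $E[S_{\hat y\hat y}] = 2E[S_{y\hat y}]-E[S_{yy}]+E[\mathrm{PRESS}]-nE[\bar r^{\,2}]$ that the paper states in Lemma~\ref{lem:ridgemoments}, and you use $H\mathbf{1}=\mathbf{1}$ so that the noise part of the cross term is exactly $n\Rv$. The identification you flag is the right caveat: the four identities hold exactly precisely when $\Rv=(1-\tilde\rho^2_{in})\Tv$, that is, when $\Tv$ is read as the fixed-design total variance $\beta_z'Z'Z\beta_z/(n-1)+\Rv$.
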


\noindent The expression for $E[\mathrm{PRESS}]$ is the decomposition of Section~\ref{sec:design}, summed over observations; the other three are new. Averaging over the sampling of the rows as well, the same identities hold with $\tilde\rho^2_{in}$ replaced by the population share $\rho^2_{in} = \beta'\Sigma\beta/\Tv$, because the sample covariance is unbiased for $\Sigma$. Since $\rho^2_{in}$ depends on the outcome, it must be assumed whether or not the covariates are already in hand; it is the population share that is assumed throughout. The proof is in Supplement~A.3, together with the exact expression for $E[\bar r^{\,2}]$. 

In this paper, we use these moments to derive approximate formulas for $E[R^2_{oos}]$ and $E[r_{oos}]$. These involve two approximations. First, they rely on first-order delta-method approximations to the ratio estimators: Definition~\ref{def:Sloo} gives $E[R^2_{oos}] = 1-E[\mathrm{PRESS}/S_{yy}]$ exactly, whereas the formulas approximate this with $1-E[\mathrm{PRESS}]/E[S_{yy}]$, the two agreeing to $O(n^{-1})$. Second, in $E[S_{\hat y\hat y}]$ we drop the term $E[\bar r^{\,2}]$, the only term in the lemma that is not a function of $n$, $p$, $\rho^2_{in}$ and the leverage sequence. It has no bearing on $E[R^2_{oos}]$, and for $E[r_{oos}]$ it makes the formula a conservative lower bound. Its relative size is bounded by the largest leverage alone. Across the designs of Section~\ref{sec:simulation} it is at most $0.6\%$ when leverage is balanced; when leverage is imbalanced it shrinks with $n$, reaching $2.1\%$ under least squares and $12\%$ under ridge at $n=30$ but at most $1\%$ at $n=600$ (Supplement~A.12).

In combination, our formulas are thus $E[R^2_{oos}] = F_R + O(n^{-1})$ and $E[r_{oos}] = F_r + O(n^{-1})$, for some closed forms $F_R$ and $F_r$; we write $\approx$ in-line for readability once this is stated, since the source and order of the error is the same throughout.

\subsection{Two design-oriented statistics}

The formulas below take four design-stage inputs: $n$, $p$, an assumed population $R^2$ ($\rho^2_{in}$) and $D^*$. They are not meant for use with the sample $R^2_{in}$, which is biased; if an estimate must be used, the adjusted $R^2$ \citep{wherry1931} should be substituted, with the caveat --- familiar from power analysis --- that this adds uncertainty to the design calculation. We discuss this further in the illustration.

\begin{theorem}[The OLS design formula]
\label{thm:ols}
For any design,
\[
E\big[R^2_{oos}\big] \;\approx\; \frac{n\psi-1}{n-1} ,\qquad
\psi:=\rho^2_{in}-\dfrac{(1-\rho^2_{in})(p+1)}{(n-p-1)D^*}.
\]
\end{theorem}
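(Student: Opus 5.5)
Theorem~\ref{thm:ols} follows from Lemma~\ref{lem:moments} combined with the first-order delta-method approximation described in Section~\ref{sec:moments}. By Definition~\ref{def:Sloo}, $E[R^2_{oos}] = 1 - E[\mathrm{PRESS}/S_{yy}]$. We replace the expectation of the ratio by the ratio of expectations, $1 - E[\mathrm{PRESS}]/E[S_{yy}]$, which incurs an $O(n^{-1})$ error. Lemma~\ref{lem:moments} supplies both expectations exactly:
\[
E[R^2_{oos}] \approx 1 - \frac{n(\Rv + D)}{(n-1)\Tv}.
\]
Note that the term $E[\bar r^{\,2}]$ never enters, because only $\mathrm{PRESS}$ and $S_{yy}$ appear. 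This is why the formula is stated ``for any design'' with no further bound needed.

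Next, we eliminate $\Rv$ and $D$ in favour of the design-stage inputs. Two relations do this.
\begin{itemize}
\item From Section~\ref{sec:ols}, $\Rv = (1-\rho^2_{in})\Tv$.
\item From Definition~\ref{def:Dstar}, $D = D_{even}/D^* = \Rv(p+1)/[(n-p-1)D^*]$.
\end{itemize}
Substituting both gives
\[
\frac{\Rv + D}{\Tv} = (1-\rho^2_{in})\Big(1 + \frac{p+1}{(n-p-1)D^*}\Big) = 1 - \psi,
\]
with $\psi$ exactly as defined in the theorem. Then
\[
1 - \frac{n(1-\psi)}{n-1} = \frac{n-1-n+n\psi}{n-1} = \frac{n\psi - 1}{n-1},
\]
which is the claimed expression.

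The algebra is routine. The only substantive step is the delta-method replacement, and I would justify it as follows. Write $S_{yy} = E[S_{yy}](1+\delta)$ with $\delta = O_p(n^{-1/2})$, and expand $\mathrm{PRESS}/S_{yy}$ to second order. The correction terms are $\mathrm{Cov}(\mathrm{PRESS}, S_{yy})/E[S_{yy}]^2$ and $E[\mathrm{PRESS}]\,\mathrm{Var}(S_{yy})/E[S_{yy}]^3$. Both are $O(n^{-1})$, provided the errors have finite fourth moments and the leverages stay bounded away from one, so that the factors $(1-h_i)^{-1}$ in the PRESS identity~\eqref{eq:press} remain bounded. This bounded-leverage condition is the main point needing care. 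I would state it as the implicit regularity assumption behind the $\approx$ and otherwise treat the $O(n^{-1})$ claim as already asserted in Section~\ref{sec:moments}.

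Finally, I would remark where the row-averaged version comes from. When the rows of $X$ are also sampled, the $\tilde\rho^2_{in}$ appearing elsewhere in the lemma is replaced by $\rho^2_{in}$. For $R^2_{oos}$, however, the quantity $\tilde\rho^2_{in}$ never appears, so the same formula holds in both settings.
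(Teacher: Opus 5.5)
Your proof is correct and follows the paper's argument. You replace $E[\mathrm{PRESS}/S_{yy}]$ by $E[\mathrm{PRESS}]/E[S_{yy}]$, insert $E[\mathrm{PRESS}]=n(\Rv+D)$ and $E[S_{yy}]=(n-1)\Tv$ from Lemma~\ref{lem:moments}, and then substitute $D=D_{even}/D^*$ and $\Rv=(1-\rho^2_{in})\Tv$. Your second-order justification of the ratio step adds detail the paper only asserts.

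One caution concerns your closing remark. With $X$ and $\beta$ fixed, $E[S_{yy}]=\beta_z'Z'Z\beta_z+(n-1)\Rv$, so the identity $E[S_{yy}]=(n-1)\Tv$ is equivalent to $\Rv=(1-\tilde\rho^2_{in})\Tv$. The realized share therefore does enter $E[R^2_{oos}]$, through $\Rv/\Tv$. Using the population $\rho^2_{in}$ there is exact only after averaging over the rows, or when $\tilde\rho^2_{in}=\rho^2_{in}$; it is not justified by $\tilde\rho^2_{in}$ being absent from the displayed moments.
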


\noindent\emph{Proof in Supplement~A.4.}

\noindent This differs from the classical, refit-and-replicate coefficient of determination in three ways, all consequences of evaluating on genuinely held-out predictions rather than a fixed design refit with new outcomes. The largest is in the design term. Refit-and-replicate charges $(1-\rho^2_{in})(p+1)/n$. This is Efron's optimism \citep{efron1986,efron2004}, the quantity behind Mallows's $C_p$ \citep{mallows1973}, and it does not depend on how $X$ is arranged because $\sum_i h_i=p+1$ exactly. Leave-one-out instead charges $(1-\rho^2_{in})(p+1)/[(n-p-1)D^*]$, which is larger by a factor of $n/(n-p-1)$ even when leverage is evenly spread and by a further $1/D^*$ when it is not. The other two are small-sample corrections: a numerator of $n\rho^2_{in}-1$ in place of $(n-1)\rho^2_{in}$ and a factor of $n/(n-1)$ on the design term. It is removing observation $i$ from the fit that introduces the $h_i/(1-h_i)$ term and with it the dependence on how leverage is distributed.

\begin{corollary}[Minimum required $R^2$]
\label{cor:minR2}
$E[R^2_{oos}] \ge 0$ exactly when $\rho^2_{in} \ge \rho^2_{in,min}$, where
\[
\rho^2_{in,min} \;=\; \frac{1}{n} \;+\; \left(1-\frac{1}{n}\right)\cdot\frac{p+1}{(p+1)+ (n-p-1) D^*}.
\]
\end{corollary}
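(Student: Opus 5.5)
Corollary~\ref{cor:minR2} is a direct algebraic consequence of Theorem~\ref{thm:ols}, read as an equality for the closed form $F_R=(n\psi-1)/(n-1)$. The condition ``$E[R^2_{oos}]\ge 0$'' therefore means $F_R\ge 0$. Since $n>1$, this is equivalent to $n\psi\ge 1$, that is, $\psi\ge 1/n$. The plan is to substitute the definition of $\psi$, solve the resulting inequality for $\rho^2_{in}$, and check that the solution is monotone, so the condition is a threshold of the form $\rho^2_{in}\ge\rho^2_{in,min}$.

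First, I introduce the shorthand $c:=(p+1)/[(n-p-1)D^*]$. This is strictly positive, since $D^*\in(0,1]$ by Definition~\ref{def:Dstar} and $n>p+1$ is needed for OLS leave-one-out to be defined. Then $\psi=\rho^2_{in}-(1-\rho^2_{in})c=\rho^2_{in}(1+c)-c$, which is affine and strictly increasing in $\rho^2_{in}$. This monotonicity is what makes ``exactly when'' correct: $F_R\ge0$ holds if and only if $\rho^2_{in}$ is at least the unique root of $\psi=1/n$.

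Second, I solve $\rho^2_{in}(1+c)-c=1/n$ to get
\[
\rho^2_{in,min}=\frac{c+1/n}{1+c}.
\]
Third, I rewrite this in the stated form by splitting it as
\[
\frac{1/n}{1+c}+\frac{c}{1+c}=\frac1n+\Bigl(1-\frac1n\Bigr)\frac{c}{1+c}.
\]
To verify this step, note that $\frac{1}{n}+\bigl(1-\frac1n\bigr)\frac{c}{1+c}=\frac{(1+c)/n+c-c/n}{1+c}=\frac{c+1/n}{1+c}$, as required. Finally, multiplying the numerator and denominator of $c/(1+c)$ by $(n-p-1)D^*$ gives $(p+1)/[(p+1)+(n-p-1)D^*]$, which is exactly the displayed expression.

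There is no real obstacle here; the only points needing care are the sign conditions. We need $n-1>0$ so that multiplying through by it preserves the inequality, and $1+c>0$ so that dividing by it does the same. The latter also yields the monotonicity that turns the inequality into a lower threshold. I would also remark that ``exactly'' refers to the closed form $F_R$, since Theorem~\ref{thm:ols} holds only up to $O(n^{-1})$.
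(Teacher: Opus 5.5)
Your proposal is correct and follows essentially the same route as the paper: treat the closed form $(n\psi-1)/(n-1)$ from Theorem~\ref{thm:ols} as the quantity set to zero, use that $\psi$ is affine and increasing in $\rho^2_{in}$, and rearrange the root $\psi=1/n$ into the stated form. The monotonicity step quietly relies on $D^*$ not depending on $\rho^2_{in}$. That holds under least squares, because $D^*$ is a function of the leverages, $n$ and $p$ alone. It is worth stating explicitly, since it is exactly what fails for ridge, where the threshold must instead be found by root-finding.
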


\noindent\emph{Proof in Supplement~A.5.}

\noindent This is the predictive analogue of a minimum detectable effect size: the share of outcome variance the predictors must explain in the population for the model to beat the sample mean. 

\begin{theorem}[Predictive reliability]
\label{thm:reliability}
The leave-one-out correlation between observed and predicted outcomes is
\[
E\big[r_{oos}\big] \;\approx\; \mathrm{sgn}\big(n\rho^2_{in}-1\big)\,\sqrt{\,\frac{(n\rho^2_{in}-1)(n-p-1)D^*}{(n\rho^2_{in}-1)(n-p-1)D^* \;+\; n(1-\rho^2_{in})(p+1)}\cdot\frac{n\rho^2_{in}-1}{n-1}\,}.
\]
The right-hand side is a lower bound in absolute value (Supplement~A.12).
\end{theorem}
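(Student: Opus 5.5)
The plan is to obtain the formula directly from the four moments of Lemma~\ref{lem:moments}, using the same first-order delta-method step that underlies Theorem~\ref{thm:ols}. By Definition~\ref{def:Sloo}, $E[r_{oos}] = E\big[S_{y\hat y}/\sqrt{S_{yy}S_{\hat y\hat y}}\big]$. I would expand this ratio to first order about the means of its three sums, which gives
\[
E[r_{oos}] \approx \frac{E[S_{y\hat y}]}{\sqrt{E[S_{yy}]\,E[S_{\hat y\hat y}]}} .
\]
Each of $S_{y\hat y}$, $S_{yy}$ and $S_{\hat y\hat y}$ is a sum of $n$ terms whose standardized fluctuation is $O(n^{-1/2})$. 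The second-order terms of the expansion of $g(a,b,c)=a/\sqrt{bc}$ are therefore $O(n^{-1})$ relative to the leading term, which matches the stated order of error.

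Next I would substitute the lemma's moments, averaging over rows so that $\tilde\rho^2_{in}$ becomes $\rho^2_{in}$, and drop the term $-nE[\bar r^{\,2}]$ from $E[S_{\hat y\hat y}]$. The key substitution is for $D$. Definition~\ref{def:Dstar} gives $D = D_{even}/D^*$. Together with $D_{even}=\Rv(p+1)/(n-p-1)$ and $\Rv=(1-\rho^2_{in})\Tv$, this yields
\[
nD = \Tv\,\frac{n(1-\rho^2_{in})(p+1)}{(n-p-1)D^*}.
\]
Every moment is then proportional to $\Tv$, so $\Tv$ cancels. Writing $a=n\rho^2_{in}-1$, the approximation becomes
\[
\frac{a}{\sqrt{(n-1)\big[a + n(1-\rho^2_{in})(p+1)/((n-p-1)D^*)\big]}} .
\]
Squaring, splitting $a^2 = a\cdot a$, and multiplying the inner bracket by $(n-p-1)D^*$ gives exactly the product under the square root in the statement. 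The sign of the ratio is the sign of $a$, which accounts for the $\mathrm{sgn}(n\rho^2_{in}-1)$ factor, since the square root itself is nonnegative. One point needs care here. The expression under the square root must be nonnegative, so when $a<0$ the bracketed denominator must keep the same sign as $a$. I would handle this by noting that the formula is meant for $a>0$, and by treating the $a<0$ case through the sign convention, with the caveat stated.

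For the lower-bound claim, I would argue that the exact $E[S_{\hat y\hat y}]$ equals the value used minus $nE[\bar r^{\,2}]\ge 0$. Replacing it with the larger value enlarges the denominator and so shrinks $|E[r_{oos}]|$. The right-hand side is therefore below the delta-method value in absolute value. Supplement~A.12 bounds the relative size of this term by the maximum leverage.

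I expect the main obstacle to be the rigorous justification of the delta step for a ratio involving a square root, where $S_{\hat y\hat y}$ is only approximately a sum of weakly dependent terms because the held-out predictions share training data. I would first try to bound $\mathrm{Var}(S_{\hat y\hat y})$ and the covariances of the three sums as $O(n)$ using the PRESS representation~\eqref{eq:press}. Under the homoscedastic moment assumptions, these are all quadratic forms in $\epsilon$ with matrices of bounded operator norm.
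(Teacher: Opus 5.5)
Your proposal is correct and is essentially the paper's own argument. It takes the first-order delta step $E[r_{oos}]\approx E[S_{y\hat y}]/\sqrt{E[S_{yy}]E[S_{\hat y\hat y}]}$, uses the moments of Lemma~\ref{lem:moments} with $nE[\bar r^{\,2}]$ dropped and $nD=nD_{even}/D^*$ substituted (the radicand is the calibration slope $C$ times $(n\rho^2_{in}-1)/(n-1)$), and gets the lower bound because the dropped term can only enlarge $E[S_{\hat y\hat y}]$.

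Two small corrections. First, the bracket $(n\rho^2_{in}-1)(n-p-1)D^*+n(1-\rho^2_{in})(p+1)$ is always positive: when $n\rho^2_{in}<1$ it exceeds $np$, using $n\rho^2_{in}-1\ge -1$ and $D^*\le 1$. So no restriction to $n\rho^2_{in}>1$ or caveat is needed, and the $\mathrm{sgn}$ factor alone covers that case. Second, the quadratic-form matrices in your proposed rigor step are built from $\mathrm{diag}\{1/(1-h_i)\}$, so they have bounded operator norm only if $\max_i h_i$ stays bounded away from one, which you would have to assume explicitly.
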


\noindent\emph{Proof in Supplement~A.6.}

\begin{corollary}[Reliability at the break-even point]
\label{cor:halfcalib-ols}
At the break-even point of Corollary~\ref{cor:minR2}, where $E[R^2_{oos}]=0$, \[
E[r_{oos}]^2 = \tfrac12\cdot(n\rho^2_{in}-1)/(n-1).
\]
\end{corollary}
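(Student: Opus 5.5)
The plan is to substitute the break-even condition from Corollary~\ref{cor:minR2} into the formula of Theorem~\ref{thm:reliability}. The point is that the first factor under the square root becomes exactly $\tfrac12$. Write $a := (n\rho^2_{in}-1)(n-p-1)D^*$ and $b := n(1-\rho^2_{in})(p+1)$. Theorem~\ref{thm:reliability} then reads
\[
E[r_{oos}]^2 \approx \frac{a}{a+b}\cdot\frac{n\rho^2_{in}-1}{n-1}.
\]
So it suffices to show that $E[R^2_{oos}]=0$ is equivalent to $a=b$.

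For that step I would use Theorem~\ref{thm:ols}. Under it, $E[R^2_{oos}]=0$ means $n\psi=1$, that is,
\[
n\rho^2_{in} - \frac{n(1-\rho^2_{in})(p+1)}{(n-p-1)D^*} = 1.
\]
Multiplying through by $(n-p-1)D^*>0$ and rearranging gives $(n\rho^2_{in}-1)(n-p-1)D^* = n(1-\rho^2_{in})(p+1)$, which is exactly $a=b$. As a consistency check, I would also solve this linear equation in $\rho^2_{in}$ and confirm it reproduces the threshold $\rho^2_{in,min}$ stated in Corollary~\ref{cor:minR2}. This shows that ``the break-even point of Corollary~\ref{cor:minR2}'' and the condition $a=b$ describe the same point.

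With $a=b$, the ratio $a/(a+b)$ equals $\tfrac12$, and the displayed identity follows. At this point $n\rho^2_{in}-1 = b/[(n-p-1)D^*] > 0$ whenever $\rho^2_{in}<1$, so the sign factor is $+1$ and squaring loses nothing.

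The only delicate point is interpretation rather than algebra. The equality holds for the design formulas $F_R$ and $F_r$, not for the exact expectations. I would state explicitly that the corollary is read at the level of the approximations in Theorems~\ref{thm:ols} and~\ref{thm:reliability}, with the same $O(n^{-1})$ caveat and the same conservative lower-bound reading for $r_{oos}$ as in Supplement~A.12. Otherwise the argument is a two-line substitution.
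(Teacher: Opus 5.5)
Your proof is correct and is essentially the paper's argument. The first factor under the square root in Theorem~\ref{thm:reliability} is the calibration slope $C = E[S_{y\hat y}]/E[S_{\hat y\hat y}] = a/(a+b)$. The break-even condition $E[R^2_{oos}]=0$, equivalently $nD/\Tv = n\rho^2_{in}-1$ (your $a=b$), forces $C=\tfrac12$, which gives the identity. You are also right that it is an equality between the design formulas, not the exact expectations.
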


\noindent\emph{Proof in Supplement~A.6.}

\noindent It is tempting to expect $E[R^2_{oos}]=E[r_{oos}]^2$, mirroring the familiar in-sample identity. To see why this does not hold, write $C := E[S_{y\hat y}]/E[S_{\hat y\hat y}]$ (notation of Definition~\ref{def:Sloo}; like the formulas above, it omits the small term discussed in Supplement~A.12) for the calibration slope of \citet{copas1983}, the slope from regressing the outcome on the held-out prediction and the quantity targeted by the clinical sample-size methods of Section~\ref{sec:existing}. Because those methods hold no case out, their slope carries no leverage term; here, through $D$, it does. Comparing these two expectations,
\[
E[r_{oos}]^2 - E[R^2_{oos}] \;\approx\; \frac{n\rho^2_{in}-1}{n-1}\cdot\frac{(1-C)^2}{C},
\]
which is non-negative whenever $\rho^2_{in}>1/n$ and $C>0$ and zero only when $C=1$. The mechanism is calibration. Correlation is scale-invariant --- $E[r_{oos}]$ asks whether $\hat y_{i|-i}$ moves with $y_i$, not whether it moves by the right amount --- while $E[R^2_{oos}]$ is affected by both. The result is that when the predictions are miscalibrated ($C \ne 1$), $E[r_{oos}]^2$ overstates accuracy \citep[see][for a vivid applied case]{pratiwi2024}. This also explains why the in-sample identity $R^2_{in}=r_{in}^2$ holds: it is not a general fact about $R^2$ and correlation but a restatement of the normal equations, $X'(y-X\hat\beta)=0$, which force fitted values orthogonal to residuals on the training data. The identity fails once that orthogonality is broken, whether by predicting a new case, leaving one out or adding a ridge penalty.

\section{Design Formulas Under Ridge Regression}
\label{sec:ridge}

Uneven leverage has a natural fix: shrink $\hat\beta$ toward zero. This is the idea of penalized regression, especially helpful when $p>n$ but usable more generally. We focus on ridge specifically, rather than the lasso, because ridge's bias and variance separate analytically, which is what makes a closed-form design formula possible. Writing $\beta = (\alpha,\beta_1,\dots,\beta_p)'$ with $\alpha$ the intercept, the ridge estimator solves
\[
\hat\beta(\kappa) = \arg\min_\beta \left\{\sum_i (y_i - x_i'\beta)^2 + \kappa \sum_j \beta_j^2\right\},
\]
with the penalty running over the $p$ covariate coefficients only --- the intercept is not shrunk.

Let $K$ be the $(p+1)\times(p+1)$ matrix that is zero except for a penalty $\kappa$ on the diagonal entries belonging to the standardized, non-intercept covariates: $K = \kappa P$, where $P$ is the diagonal projection matrix with a zero in the intercept's slot and ones elsewhere. Then $\hat\beta(\kappa) = (X'X+K)^{-1}X'y$. Unlike $\hat\beta$, it is biased: $E[\hat\beta(\kappa)] - \beta = -(X'X+K)^{-1}K\beta$.

\begin{definition}[The ridge design prediction penalty]
\label{def:Dridge}
\[
D(\kappa) := \frac{1}{n}\sum_i \left\{ x_i'\,\mathrm{Var}(\hat\beta_{-i}(\kappa))\,x_i \;+\; \Big(x_i'\big[E\hat\beta_{-i}(\kappa)-\beta\big]\Big)^2 \right\},
\]
generalizing Definition~\ref{def:D}, which is the special case $\kappa=0$.
\end{definition}

A difficulty for design stage analysis with ridge is that the results depend upon the penalization parameter $\kappa$. When outcomes are available, this value is typically selected using cross-validation. Without having $y$ on hand, we instead derive an optimal $\kappa$ based upon an assumed structure of the $\beta$s. 

\begin{theorem}[Design-implied ridge penalty]
\label{thm:ridgepenalty}
Assume the $p$ covariate coefficients are a draw from a mean-zero, isotropic distribution (variance $\nu^2$), independent of $X$ and the residual noise. Then the penalty minimizing the expected $D(\kappa)$, for any design, is
\[
\kappa^* = \frac{p(1-\rho^2_{in})}{\rho^2_{in}},
\]
independent of the outcome variation $\Tv$.
\end{theorem}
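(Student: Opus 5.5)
The plan is to avoid differentiating $D(\kappa)$ directly and instead use a best-linear-predictor argument that minimizes each summand of Definition~\ref{def:Dridge} separately, uniformly over $i$. Fix $i$ and write $X_{-i}$, $y_{-i}$ for the design and outcomes with row $i$ removed, and $A_{-i}(\kappa) = X_{-i}'X_{-i} + K$. The held-out ridge prediction is $x_i'\hat\beta_{-i}(\kappa) = x_i' A_{-i}(\kappa)^{-1}X_{-i}'y_{-i}$, which is linear in $y_{-i}$. Averaging over the prior on $\beta_z$, the $i$-th summand of $D(\kappa)$ is the variance term plus the squared bias. This equals $E_{\beta,\epsilon}\big[(x_i'\hat\beta_{-i}(\kappa) - x_i'\beta)^2\big]$, the mean squared error of a linear predictor of $x_i'\beta$ from $y_{-i}$. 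Only first and second moments of $\beta_z$ and $\epsilon$ enter this quantity: $\beta_z$ has mean $0$ and covariance $\nu^2 I_p$, $\epsilon$ has covariance $\Rv I$, and the two are independent of each other and of $X$.

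The key step is to identify, among all linear maps $L$, the minimizer of $E[(L y_{-i} - x_i'\beta)^2]$ under these second moments. The intercept is unpenalized, which I would model as a diffuse prior: give $\alpha$ variance $\tau^2$ and let $\tau^2\to\infty$. For finite $\tau^2$ the best linear predictor of $\beta$ from $y_{-i}$ has the familiar generalized-least-squares form
\[
\big(X_{-i}'X_{-i} + \Rv\,\mathrm{diag}(\tau^{-2}, \nu^{-2} I_p)\big)^{-1}X_{-i}'y_{-i}.
\]
As $\tau^{-2}\to 0$ this becomes $(X_{-i}'X_{-i} + \kappa P)^{-1}X_{-i}'y_{-i}$ with $\kappa = \Rv/\nu^2$. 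Linearity of $x_i'(\cdot)$ then shows that $\kappa=\Rv/\nu^2$ minimizes the $i$-th summand. This $\kappa$ does not depend on $i$, on $x_i$, or on the arrangement of $X$. Hence it minimizes the average over $i$, which is the expected $D(\kappa)$, and it does so ``for any design.''

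I expect the main obstacle to be making the improper intercept prior rigorous. The expected bias term must contain no contribution from $\alpha$, which holds because $E\hat\beta_{-i}(\kappa) - \beta = -A_{-i}^{-1}K\beta$ and $K = \kappa P$ annihilates the intercept slot. The remaining step is to check that the limiting MSE, taken over the class of predictors that are unbiased in $\alpha$, is still minimized at the limit estimator. If the limit argument proves awkward, I would use the fact that the intercept column is orthogonal to the centered covariates in the full design and profile out $\alpha$. As a fallback, I would differentiate the expected summand
\[
x_i'A^{-1}\big(\Rv X_{-i}'X_{-i} + \nu^2\kappa^2 P\big)A^{-1}x_i
\]
directly in $\kappa$. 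Rewriting $X_{-i}'X_{-i} = A - \kappa P$ shows that the derivative carries the factor $(\nu^2\kappa - \Rv)$, which again gives the stationary point $\kappa=\Rv/\nu^2$.

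The last step translates $\Rv/\nu^2$ into design-stage inputs. The covariates are standardized, so $\mathrm{tr}\,\Sigma_z = p$, and by isotropy $E[\beta_z'\Sigma_z\beta_z] = \nu^2 p$. This identifies $\rho^2_{in}\Tv = p\nu^2$, and combining it with $\Rv = (1-\rho^2_{in})\Tv$ gives
\[
\kappa^* = \frac{\Rv}{\nu^2} = \frac{p(1-\rho^2_{in})}{\rho^2_{in}},
\]
in which $\Tv$ cancels, as claimed.
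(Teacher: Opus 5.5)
Your proposal is correct, but your main argument is not the paper's. The paper's proof (Supplement A.7) is essentially your fallback. It averages the per-observation variance and squared-bias terms over the isotropic prior, differentiates in $\kappa$, and observes that every term is stationary at $\kappa=\Rv/\nu^2$ whatever the leverage sequence. It then substitutes $\nu^2=\rho^2_{in}\Tv/p$ exactly as you do.

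Your primary route instead identifies $x_i'\hat\beta_{-i}(\Rv/\nu^2)$ as the best linear unbiased predictor of $x_i'\beta$ from $y_{-i}$. The relevant model has a fixed intercept and random slopes, and Henderson's mixed-model equations there are precisely $(X_{-i}'X_{-i}+\kappa P)\hat\beta=X_{-i}'y_{-i}$. The paper mentions this fact only informally, after Theorem~\ref{thm:ridgeslope}. It buys you a genuine minimum rather than a stationary point, over all linear predictors unbiased in $\alpha$ and not just the ridge family. It also explains why the optimum is free of $x_i$ and of the design: the BLUP of $\beta$ does not depend on which linear functional is being predicted.

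The limit step you flag is short.
\begin{itemize}
\item Every ridge predictor $L_\kappa$ satisfies $L_\kappa\mathbf{1}=1$. Because $Ke_1=0$ (with $e_1$ the intercept coordinate), $X_{-i}'\mathbf{1}=X_{-i}'X_{-i}e_1=A_{-i}(\kappa)e_1$, so $A_{-i}(\kappa)^{-1}X_{-i}'\mathbf{1}=e_1$.
\item Write $g(L)$ for the risk at $\alpha=0$. For any linear $L$ with $L\mathbf{1}=1$, the diffuse-prior risk $\tau^2(L\mathbf{1}-1)^2+g(L)$ equals the fixed-$\alpha$ risk $g(L)$.
\item With $L_\tau$ your finite-$\tau$ best linear predictor, this gives $g(L)\ge\tau^2(L_\tau\mathbf{1}-1)^2+g(L_\tau)\ge g(L_\tau)$.
\item By continuity, $g(L_\tau)\to g(L_{\Rv/\nu^2})$, since $X_{-i}'X_{-i}+\kappa P$ is always invertible.
\end{itemize}
Your profiling alternative would be less clean, because $\mathbf{1}$ is no longer orthogonal to the centered covariates once row $i$ is deleted. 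You do not need it.

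The paper's calculus route is more elementary and self-contained, but stationarity alone needs a sign check to become a minimum. Your fallback supplies it. With $w=PA_{-i}^{-1}x_i$, the derivative of the expected summand is $2(\nu^2\kappa-\Rv)\,w'(I-\kappa A_{-i}^{-1})w$. The second factor is nonnegative, because the covariate block of $A_{-i}^{-1}$ is $(S+\kappa I)^{-1}$ for a positive semidefinite Schur complement $S$.
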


\noindent\emph{Proof sketch (full derivation in Supplement~A.7).} Treating the $p$ standardized covariate coefficients as a mean-zero, isotropic draw with variance $\nu^2 = \rho^2_{in}\Tv/p$ turns the unknown $\beta$ into a single design parameter. Taking expectations of the per-observation bias and variance terms over this prior and differentiating with respect to $\kappa$ shows that every per-observation term in $D(\kappa)$ is stationary at exactly $\kappa^* = \Rv/\nu^2$, regardless of the design or leverage distribution. Substituting $\nu^2 = \rho^2_{in}\Tv/p$ and $\Rv = \Tv(1-\rho^2_{in})$ gives $\kappa^*$ with $\Tv$ canceling exactly. This is exactly the penalty proposed by \citet{hoerl1975}; what is new here is that $p$ and a hypothesized $\rho^2_{in}$ are design-stage quantities.

For a penalty $\kappa$, write $B(\kappa) := X'X+\kappa P$ and $g_i(\kappa) := x_i'B(\kappa)^{-1}x_i$ for the ridge leverage of observation $i$, the analogue of $h_i$.

\begin{lemma}[Moments of the held-out cross-products under ridge]
\label{lem:ridgemoments}
At penalty $\kappa$, under the isotropic prior of Theorem~\ref{thm:ridgepenalty}, and writing $S_{y\hat y}$, $S_{yy}$, $S_{\hat y\hat y}$ and $\mathrm{PRESS}$ for the sums of \eqref{eq:estimators} and $\bar r$ for the mean held-out residual,
\[
E[S_{yy}] \;=\; (n-1)\Tv, \qquad E[\mathrm{PRESS}] \;=\; n\big(\Rv+D(\kappa)\big),
\]
\[
E[S_{y\hat y}] \;=\; \rho^2_{in}\Tv(n-1) \;-\; \frac{\rho^2_{in}\Tv\kappa}{p}\sum_i\frac{g_i(\kappa)-1/n}{1-g_i(\kappa)} \;-\; \Rv,
\]
\[
E[S_{\hat y\hat y}] \;=\; 2\,E[S_{y\hat y}] \;-\; E[S_{yy}] \;+\; E[\mathrm{PRESS}] \;-\; n\,E[\bar r^{\,2}].
\]
\end{lemma}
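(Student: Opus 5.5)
The plan mirrors the proof of Lemma~\ref{lem:moments}, replacing the OLS leave-one-out algebra by its ridge analogue. Throughout, expectations are over both the noise and the isotropic prior on $\beta_z$. I use $E[\beta_z\beta_z'] = \nu^2 I_p$ with $\nu^2=\rho^2_{in}\Tv/p$, and I treat the design as realizing the population share, in the sense $\mathrm{tr}(Z'Z)/(n-1)=p$ for standardized covariates.

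\textbf{The two easy moments.} For $E[S_{yy}]$, write $y=X\beta+\epsilon$ with $H_0=\mathbf 1\mathbf 1'/n$. Then
\[
E[S_{yy}] = E[\beta_z'Z'Z\beta_z]+(n-1)\Rv = \nu^2\,\mathrm{tr}(Z'Z)+(n-1)\Rv = (n-1)(\rho^2_{in}\Tv+\Rv),
\]
and the right-hand side equals $(n-1)\Tv$. The statement $E[\mathrm{PRESS}]=n(\Rv+D(\kappa))$ is Definition~\ref{def:Dridge} summed over $i$. It uses only that $\epsilon_i$ is independent of $\hat\beta_{-i}(\kappa)$, so each held-out error $\epsilon_i - x_i'(\hat\beta_{-i}-\beta)$ has second moment $\Rv + x_i'\mathrm{Var}(\hat\beta_{-i})x_i + (\text{bias})^2$.

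\textbf{The identity for $S_{\hat y\hat y}$.} This is purely algebraic. Let $u_i=y_i-\bar y$ and $v_i=\hat y_{i|-i}-\bar{\hat y}$. Then $\sum(u_i-v_i)^2=\sum_i u_i^2-2\sum_i u_iv_i+\sum_i v_i^2$. Also $u_i-v_i=(y_i-\hat y_{i|-i})-\bar r$, so $\sum(u_i-v_i)^2=\mathrm{PRESS}-n\bar r^{\,2}$. Rearranging and taking expectations gives the last display of the lemma, with no approximation.

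\textbf{The main step, $E[S_{y\hat y}]$.} I would use the ridge PRESS identity, which follows from Sherman--Morrison on $B(\kappa)-x_ix_i'$ in the same way as~\eqref{eq:press}:
\[
y_i-\hat y_{i|-i}=\frac{e_i(\kappa)}{1-g_i},
\qquad e(\kappa)=(I-G)y,\quad G=XB^{-1}X'.
\]
Hence $\hat y_{i|-i}=y_i-e_i/(1-g_i)$ and
\[
S_{y\hat y}=S_{yy}-\sum_i (y_i-\bar y)\,\frac{[(I-G)y]_i}{1-g_i}.
\]
The sum is a quadratic form $y'(I-H_0)\Lambda(I-G)y$, where $\Lambda=\mathrm{diag}(1/(1-g_i))$. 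Its expectation is
\[
\Rv\,\mathrm{tr}\big[(I-H_0)\Lambda(I-G)\big]+\beta'X'(I-H_0)\Lambda(I-G)X\beta .
\]

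For the signal part, note that $(I-G)X=X-XB^{-1}(B-\kappa P)=\kappa XB^{-1}P$. Only $\beta_z$ survives, since $(I-H_0)$ kills the intercept column. Averaging over the prior turns the signal part into
\[
\nu^2\kappa\,\mathrm{tr}\big[P B^{-1}X'\Lambda'(I-H_0)X P\big].
\]
Writing this as $\sum_i (1-g_i)^{-1}\,x_i'B^{-1}P\,X'(I-H_0)\ldots$ and using $X'X P=(B-\kappa P)P$-type identities, each observation should reduce to a term proportional to $(g_i-1/n)$. The $1/n$ comes from the centering column $\mathbf 1'X/n$ and the intercept block, which is unpenalized and so contributes $1/n$ to $g_i$. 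This combines with $S_{yy}$'s signal $(n-1)\rho^2_{in}\Tv$ to give the middle term of the lemma.

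For the noise part, $\mathrm{tr}[(I-H_0)\Lambda(I-G)]=\sum_i\big[(1-g_i)-\tfrac1n(1-g_i)\ldots\big]/(1-g_i)$ after expanding $H_0$. It should collapse to $(n-1)\Rv+\Rv$ up to terms that cancel; with $S_{yy}$ contributing $(n-1)\Rv$, this leaves the $-\Rv$. The two checks are: at $\kappa=0$ the middle term vanishes and one recovers Lemma~\ref{lem:moments}'s $\Tv(n\rho^2_{in}-1)$; and the noise trace uses $\sum_j G_{ij}/n=1/n$, since $\mathbf 1$ lies in the unpenalized column space.

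\textbf{Expected obstacle.} The hard part is the signal trace. The term $\sum_i G_{ij}$-weighted by $1/(1-g_i)$ involves off-diagonal entries of $G$. Reducing them exactly to the diagonal $g_i$ needs the identity $e_i=\ldots$ applied columnwise, that is, recognizing $[X B^{-1}PX']$ as $G-\mathbf1\mathbf1'/n$ when the intercept block decouples. I would first verify this decoupling from the centering of $Z$, and then handle the centering cross-terms explicitly.
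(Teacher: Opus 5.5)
Your proposal is correct and follows essentially the paper's route: the exact algebraic identity $S_{\hat y\hat y}=2S_{y\hat y}-S_{yy}+\mathrm{PRESS}-n\bar r^{\,2}$, the bias--variance decomposition of each held-out error for $E[\mathrm{PRESS}]$, and the ridge PRESS identity, which turns $S_{yy}-S_{y\hat y}$ into $y'(I-H_0)\Lambda(I-G)y$ with $(I-G)X=\kappa XB^{-1}P$ and lets the isotropic prior convert the signal part into a trace. The step you flag as the obstacle is immediate and involves no off-diagonal entries: $\mathbf 1'Z=0$ makes $B$ block diagonal, so $(I-H_0)XP=XP$ and $XPB^{-1}X'=Z(Z'Z+\kappa I)^{-1}Z'=G-\mathbf 1\mathbf 1'/n$, and since $\Lambda$ is diagonal the signal term is $\kappa\nu^2\,\mathrm{tr}[\Lambda(G-\mathbf 1\mathbf 1'/n)]=\kappa\nu^2\sum_i(g_i-1/n)/(1-g_i)$; the noise trace is exactly $\mathrm{tr}[\Lambda(I-G)(I-H_0)]=\mathrm{tr}[\Lambda(I-G)]=n$ because $G\mathbf 1=\mathbf 1$ (your displayed per-$i$ expansion, with a $\tfrac1n(1-g_i)$ term, would sum to $n-1$ and should be replaced by this).
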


\noindent Setting $\kappa=0$ recovers Lemma~\ref{lem:moments}, since the middle term vanishes. The proof, common to both, is in Supplement~A.3.

\subsection{The design parameter $D^*(\kappa^*)$}

\begin{theorem}[Ridge design efficiency at $\kappa^*$]
\label{thm:ridgeeff}
When $\kappa=\kappa^*$, let $g_i(\kappa^*)$ be the ridge leverage at $\kappa^*$. Then the design penalty --- in expectation over the errors and the coefficient prior of Theorem~\ref{thm:ridgepenalty}, conditional on $X$ --- reduces exactly to
\[
D(\kappa^*) = \frac{\Rv}{n} \cdot \sum_i \frac{g_i(\kappa^*)}{1-g_i(\kappa^*)}.
\]
\end{theorem}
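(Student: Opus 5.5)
Fix an observation $i$ and write $B_{-i} := B(\kappa) - x_i x_i'$ for the leave-$i$-out ridge matrix. The plan is to compute the per-observation summand of Definition~\ref{def:Dridge} in expectation over the errors and the isotropic prior. Then we show that at $\kappa=\kappa^*$ the bias and variance pieces combine into $\Rv\, x_i'B_{-i}^{-1}x_i$. Finally, Sherman--Morrison converts this into $g_i/(1-g_i)$, exactly as in Section~\ref{sec:design} for OLS.

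The first step is the variance term. Since $\hat\beta_{-i}(\kappa)=B_{-i}^{-1}X_{-i}'y_{-i}$, we have
\[
\mathrm{Var}(\hat\beta_{-i}) = \Rv\, B_{-i}^{-1}X_{-i}'X_{-i}B_{-i}^{-1} = \Rv\, B_{-i}^{-1}(B_{-i}-\kappa P)B_{-i}^{-1}.
\]
This equals $\Rv\,(B_{-i}^{-1} - \kappa B_{-i}^{-1}PB_{-i}^{-1})$.

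The second step is the bias term, which is $E\hat\beta_{-i}-\beta = -\kappa B_{-i}^{-1}P\beta$. Its squared projection on $x_i$, averaged over the prior, is
\[
\kappa^2\,x_i'B_{-i}^{-1}P\,E[\beta\beta']\,P B_{-i}^{-1}x_i .
\]
The prior has mean zero and variance $\nu^2$ on the covariate slots, so $P\,E[\beta\beta']\,P=\nu^2 P$, whatever is assumed about the intercept. The expected bias term is therefore $\kappa^2\nu^2\, x_i'B_{-i}^{-1}PB_{-i}^{-1}x_i$.

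Adding the two pieces gives the summand
\[
\Rv\, x_i'B_{-i}^{-1}x_i + (\kappa^2\nu^2-\kappa\Rv)\, x_i'B_{-i}^{-1}PB_{-i}^{-1}x_i .
\]
At $\kappa^*=\Rv/\nu^2$, from the proof of Theorem~\ref{thm:ridgepenalty}, the coefficient $\kappa^2\nu^2-\kappa\Rv=\kappa(\kappa\nu^2-\Rv)$ vanishes exactly. This holds for every $i$ and every design. We should also check the substitution $\nu^2=\rho^2_{in}\Tv/p$ and $\Rv=(1-\rho^2_{in})\Tv$, which confirms that this $\kappa^*$ equals $p(1-\rho^2_{in})/\rho^2_{in}$.

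The last step is Sherman--Morrison applied to $B_{-i}=B(\kappa^*)-x_ix_i'$, which gives
\[
x_i'B_{-i}^{-1}x_i = \frac{g_i(\kappa^*)}{1-g_i(\kappa^*)}.
\]
This needs $g_i<1$. That holds because $B(\kappa^*)\succeq x_ix_i'$ plus a positive definite part whenever $n\ge2$ with the intercept present, and it is the same condition under which leave-one-out is defined. Averaging over $i$ gives the stated $D(\kappa^*)$.

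The main point needing care is the cancellation: the variance cross term $-\kappa\Rv B_{-i}^{-1}PB_{-i}^{-1}$ must have exactly the same matrix form as the bias term. This depends on the intercept being unpenalized, so that $X_{-i}'X_{-i}=B_{-i}-\kappa P$. It also depends on the prior being isotropic on precisely the penalized coordinates, so that the bias covariance is $\nu^2P$ rather than $\nu^2 I$. The leave-one-out matrix is $B_{-i}$, not $B(\kappa)$, so the penalty $P$ is retained intact after row deletion, and I would state this explicitly.
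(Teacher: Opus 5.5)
Your proposal is correct and follows essentially the paper's route (Supplement~A.8). You do a bias--variance computation with the downdated matrix $B_{-i}=B(\kappa)-x_ix_i'$, where the variance cross term $-\kappa\sigma^2_{y|x}\,x_i'B_{-i}^{-1}PB_{-i}^{-1}x_i$ cancels the prior-averaged squared bias $\kappa^2\nu^2\,x_i'B_{-i}^{-1}PB_{-i}^{-1}x_i$ at $\kappa^*=\sigma^2_{y|x}/\nu^2$, and then the Sherman--Morrison step gives $x_i'B_{-i}^{-1}x_i=g_i/(1-g_i)$; your remarks on why the prior covariance must match the penalty matrix $P$ and why $g_i(\kappa^*)<1$ are also correct and worth keeping.
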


\noindent This is structurally identical to Definition~\ref{def:D}'s $D$, with $g_i(\kappa^*)$ replacing $h_i$. The full derivation, including the Sherman--Morrison--Woodbury downdate of $B$ and the cancellation of the bias and variance terms at $\kappa=\kappa^*$, is given in Supplement~A.8.

\begin{definition}[Effective parameters $p_{eff}(\kappa^*)$]
\label{def:peff}
The effective number of parameters when $\kappa = \kappa^*$ is
\[
p_{eff}(\kappa^*) = \sum_j \frac{\lambda_j}{\lambda_j+\kappa^*},
\]
where $\lambda_1,\dots,\lambda_p$ are the eigenvalues of $Z'Z$; the unpenalized intercept contributes the separate $1$ below.
\end{definition}

\begin{definition}[Ridge design efficiency factor]
\label{def:Dstarridge}
At $\kappa=\kappa^*$, \[
D^*(\kappa^*) := D_{even}(\kappa^*)/D(\kappa^*),
\]
where $D_{even}(\kappa^*) := \Rv \cdot [(p_{eff}(\kappa^*)+1)/n] / [1-(p_{eff}(\kappa^*)+1)/n]$, using the exact trace identity $\sum_i g_i(\kappa^*) = 1+p_{eff}(\kappa^*)$ (Supplement~A.2), the ridge analogue of $\sum_i h_i = p+1$. As in Definition~\ref{def:Dstar}, $D^*(\kappa^*) \in (0,1]$ by Jensen's inequality, equal to 1 exactly when ridge leverage is perfectly even.
\end{definition}

\subsection{Two design-oriented statistics under penalization}

We now return to $E[R^2_{oos}]$ and $E[r_{oos}]$ under the penalized model.

\begin{theorem}[The ridge design formula]
\label{thm:ridgeR2}
At $\kappa=\kappa^*$, for any design,
\[
E\big[R^2_{oos}(\kappa^*)\big] \;\approx\; \frac{n\psi(\kappa^*)-1}{n-1} ,\qquad
\psi(\kappa^*):=\rho^2_{in}-\dfrac{(1-\rho^2_{in})\,(p_{eff}(\kappa^*)+1)}{(n-p_{eff}(\kappa^*)-1)\,D^*(\kappa^*)}.
\]
\end{theorem}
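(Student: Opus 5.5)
The argument runs parallel to the OLS case (Theorem~\ref{thm:ols}), with the ridge ingredients substituted at each step. I would begin from the exact identity of Definition~\ref{def:Sloo}, $E[R^2_{oos}(\kappa^*)] = 1 - E[\mathrm{PRESS}/S_{yy}]$, and apply the same first-order delta-method replacement used throughout: $E[\mathrm{PRESS}/S_{yy}] = E[\mathrm{PRESS}]/E[S_{yy}] + O(n^{-1})$. Lemma~\ref{lem:ridgemoments} supplies both moments at any $\kappa$: $E[S_{yy}]=(n-1)\Tv$ and $E[\mathrm{PRESS}] = n(\Rv + D(\kappa))$. The term $E[\bar r^{\,2}]$ appears only in $E[S_{\hat y\hat y}]$, so no approximation beyond the delta method is needed. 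This gives
\[
E[R^2_{oos}(\kappa^*)] \approx 1 - \frac{n(\Rv + D(\kappa^*))}{(n-1)\Tv}.
\]

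The second step rewrites $D(\kappa^*)$ in the form appearing in $\psi(\kappa^*)$. By Definition~\ref{def:Dstarridge}, $D(\kappa^*) = D_{even}(\kappa^*)/D^*(\kappa^*)$. Here $D_{even}(\kappa^*) = \Rv (p_{eff}+1)/(n-p_{eff}-1)$, after multiplying the numerator and denominator of the definition by $n$. The legitimacy of this step rests on Theorem~\ref{thm:ridgeeff}, which says that at $\kappa^*$ the bias and variance pieces of Definition~\ref{def:Dridge} collapse to $(\Rv/n)\sum_i g_i/(1-g_i)$. Combined with the trace identity $\sum_i g_i(\kappa^*) = 1+p_{eff}(\kappa^*)$, this makes $D^*(\kappa^*)$ a genuine Jensen ratio. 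Substituting $\Rv = (1-\rho^2_{in})\Tv$ then gives
\[
\frac{\Rv + D(\kappa^*)}{\Tv} = (1-\rho^2_{in})\left[1 + \frac{p_{eff}+1}{(n-p_{eff}-1)D^*(\kappa^*)}\right] = 1 - \psi(\kappa^*),
\]
which can be checked directly from the definition of $\psi(\kappa^*)$.

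The final step is algebra: $1 - n(1-\psi)/(n-1) = (n\psi - 1)/(n-1)$, the stated form. The $\Tv$ cancels, as in the OLS case.

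The main obstacle is that the expectation here is taken jointly over the errors and the isotropic coefficient prior of Theorem~\ref{thm:ridgepenalty}, so $E[S_{yy}]=(n-1)\Tv$ must hold under that prior. This requires $\beta_z'Z'Z\beta_z/(n-1)$ to average to $\rho^2_{in}\Tv$. With $\nu^2=\rho^2_{in}\Tv/p$ and standardized covariates, $E[\beta_z'Z'Z\beta_z] = \nu^2\,\mathrm{tr}(Z'Z) = \nu^2 p(n-1)$, which confirms it. Beyond this check, the substantive content has already been established in Theorem~\ref{thm:ridgeeff}, namely the exact bias--variance cancellation at $\kappa^*$ (Supplement~A.8). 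Its role here is only to let $D^*(\kappa^*)$ carry all design dependence, so the proof reduces to verifying consistent expectations and the algebraic rearrangement.
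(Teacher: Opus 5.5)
Your proposal is correct and follows essentially the paper's route. It uses the delta-method replacement $1-E[\mathrm{PRESS}]/E[S_{yy}]$ with the moments of Lemma~\ref{lem:ridgemoments}, then the closed form of $D(\kappa^*)$ from Theorem~\ref{thm:ridgeeff} with the trace identity to write $D(\kappa^*)/\Tv=(1-\rho^2_{in})(p_{eff}(\kappa^*)+1)/[(n-p_{eff}(\kappa^*)-1)D^*(\kappa^*)]$, and finishes with the same algebra as the least-squares case. Your explicit check that $E[S_{yy}]=(n-1)\Tv$ still holds after averaging over the isotropic prior, via $\nu^2\,\mathrm{tr}(Z'Z)=\rho^2_{in}\Tv(n-1)$, is a detail the argument does need.
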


\noindent\emph{Proof in Supplement~A.9.}

\noindent This is Theorem~\ref{thm:ols} with $p_{eff}(\kappa^*)$ in place of $p$ and $D^*(\kappa^*)$ in place of $D^*$.

\begin{corollary}[Minimum required signal under ridge]
\label{cor:minridge}
At $\kappa=\kappa^*$, $E[R^2_{oos}(\kappa^*)]\ge 0$ exactly when $\rho^2_{in} \ge \rho^2_{in,min}$, where
\begin{equation}
\rho^2_{in,min} \;=\; \frac{1}{n} \;+\; \left(1-\frac{1}{n}\right)\cdot
  \frac{p_{eff}(\kappa^*)+1}{\big(p_{eff}(\kappa^*)+1\big) + \big(n-p_{eff}(\kappa^*)-1\big)D^*(\kappa^*)}.
\label{eq:ridgemin}
\end{equation}
\end{corollary}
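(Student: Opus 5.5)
The plan is to run the argument of Corollary~\ref{cor:minR2} again, using Theorem~\ref{thm:ridgeR2} as the starting point instead of Theorem~\ref{thm:ols}. Write $q := p_{eff}(\kappa^*)+1$ and $\delta := D^*(\kappa^*)$. By Theorem~\ref{thm:ridgeR2}, $E[R^2_{oos}(\kappa^*)]\approx (n\psi(\kappa^*)-1)/(n-1)$. The denominator $n-1$ is positive, so the sign of the formula is the sign of $n\psi(\kappa^*)-1$. The condition $E[R^2_{oos}(\kappa^*)]\ge 0$ is therefore equivalent to $\psi(\kappa^*)\ge 1/n$, that is,
\[
\rho^2_{in} - \frac{(1-\rho^2_{in})\,q}{(n-q)\,\delta} \;\ge\; \frac1n .
\]
As in the OLS case, ``exactly when'' refers to the design formula itself, which is the stated approximation to the expectation.

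Next I would multiply through by $(n-q)\delta>0$. This needs $q<n$, which holds whenever the ridge leverages satisfy $g_i<1$, because $\sum_i g_i=q$ by the trace identity of Definition~\ref{def:Dstarridge}. It also needs $\delta>0$. The result is the linear inequality
\[
\rho^2_{in}\big[(n-q)\delta + q\big] \;\ge\; q + \tfrac{1}{n}(n-q)\delta .
\]
Solving for $\rho^2_{in}$ gives the threshold $\big[q+(n-q)\delta/n\big]/\big[q+(n-q)\delta\big]$. To put it in the form of \eqref{eq:ridgemin}, write $q = \tfrac1n q + (1-\tfrac1n)q$ and group terms:
\[
\frac{q + (n-q)\delta/n}{q+(n-q)\delta} = \frac{\tfrac1n\big[q+(n-q)\delta\big] + (1-\tfrac1n)q}{q+(n-q)\delta} = \frac1n + \Big(1-\frac1n\Big)\frac{q}{q+(n-q)\delta}.
\]

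I expect the main obstacle to be a subtlety rather than the algebra itself. In the OLS corollary, $p$ and $D^*$ are fixed by $X$. Here $\kappa^*=p(1-\rho^2_{in})/\rho^2_{in}$ depends on $\rho^2_{in}$ through Theorem~\ref{thm:ridgepenalty}, so $q$ and $\delta$ also depend on $\rho^2_{in}$. This means \eqref{eq:ridgemin} is an implicit, fixed-point characterization. I would state that $q$ and $\delta$ are evaluated at the $\kappa^*$ corresponding to the $\rho^2_{in}$ under consideration. For each such $\rho^2_{in}$, the equivalence ``$E[R^2_{oos}(\kappa^*)]\ge0 \iff \rho^2_{in}\ge$ right-hand side'' then holds pointwise by the manipulation above, which is all the ``exactly when'' claim requires.

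If a monotone threshold statement is wanted, I would argue it as follows. As $\rho^2_{in}$ increases, $\kappa^*$ decreases. Then $p_{eff}$ increases and the right-hand side moves, but $\psi(\kappa^*)-1/n$ changes sign only once. I would leave this as a remark rather than a required step.
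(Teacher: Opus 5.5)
Your algebra is correct and is the same rearrangement the paper uses: Corollary~\ref{cor:minR2}'s manipulation applied to $\psi(\kappa^*)\ge 1/n$, with $q<n$ and $\delta>0$ licensing the multiplication.

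\textbf{The gap} is in your last step, where you treat the pointwise equivalence as ``all the `exactly when' claim requires.'' Since $q$ and $\delta$ move with $\rho^2_{in}$ through $\kappa^*$, the pointwise statement ``$E[R^2_{oos}(\kappa^*)]\ge 0$ iff $\rho^2_{in}\ge \mathrm{RHS}(\rho^2_{in})$'' is just $\psi(\kappa^*)\ge 1/n$ rewritten. It does not yet give a threshold. The corollary, as the paper uses it (a single number such as ``ridge needs $\rho^2_{in}\ge 0.027$'', found by bracketed root-finding), asserts that the solution of $\rho^2_{in}=\mathrm{RHS}(\rho^2_{in})$ is a cut-off: nonnegative expectation above it, negative below. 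That needs $E[R^2_{oos}(\kappa^*)]$ to change sign at most once in $\rho^2_{in}$.

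Your sketched reason does not deliver this. As $\rho^2_{in}$ rises, $\kappa^*$ falls and $q$ rises, which \emph{enlarges} the penalty term $(1-\rho^2_{in})q/[(n-q)\delta]$ even as the leading $\rho^2_{in}$ grows, and $\delta$ shifts as well. Equivalently, $\mathrm{RHS}(\rho^2_{in})$ is itself increasing, so nothing in the sketch rules out several crossings.

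\textbf{The missing ingredient}, which the paper invokes, is convexity of $E[R^2_{oos}(\kappa^*)]$ in $\rho^2_{in}$. One short way to see it:
\begin{itemize}
\item For each fixed $\kappa$, the prior-averaged $\Rv+D(\kappa)$ is affine in $\rho^2_{in}$. The variance part is proportional to $\Rv=(1-\rho^2_{in})\Tv$, and the squared-bias part to $\nu^2=\rho^2_{in}\Tv/p$.
\item By Theorem~\ref{thm:ridgepenalty}, $\kappa^*$ minimizes this over $\kappa$. So $\Rv+D(\kappa^*)$ is a pointwise minimum of affine functions, hence concave.
\item Therefore $E[R^2_{oos}(\kappa^*)]\approx 1-n(\Rv+D(\kappa^*))/[(n-1)\Tv]$ is convex.
\end{itemize}

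At the left end, $\rho^2_{in}\to 0$ sends $\kappa^*\to\infty$, $g_i(\kappa^*)\to 1/n$, $q\to 1$ and $\delta\to 1$. Then $\psi(\kappa^*)\to -1/(n-1)<1/n$, so the formula is negative there. A convex function that is negative near the left endpoint has at most one zero and is strictly positive beyond it. Hence:
\begin{itemize}
\item When the fixed point of \eqref{eq:ridgemin} exists in $(0,1)$, it is unique and is the threshold.
\item When it does not exist (possible for $p>n$ with small $D^*(\kappa^*)$), no $\rho^2_{in}<1$ makes the expectation nonnegative.
\end{itemize}
With this step added, your pointwise identity becomes the threshold statement the corollary is meant to give.
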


\noindent\emph{Proof in Supplement~A.10.}

As in Corollary~\ref{cor:minR2}, the threshold adds a flat $1/n$ shift to the threshold obtained by setting $\rho^2_{in}-D(\kappa^*)/\Tv$ to zero. Unlike Corollary~\ref{cor:minR2}, however, $\kappa^*$ is itself a function of $\rho^2_{in}$ (Theorem~\ref{thm:ridgepenalty}), so $\rho^2_{in,min}$ appears on both sides through $p_{eff}(\kappa^*)=\sum_j \lambda_j/(\lambda_j+\kappa^*)$. For a general design the threshold must therefore be obtained by bracketed root-finding. The root is unique when it exists, because at $\kappa^*$ the expected out-of-sample $R^2$ is convex in $\rho^2_{in}$; when $p>n$ and $D^*(\kappa^*)$ is low there may be none below one, in which case no value of $\rho^2_{in}$ lets ridge be expected to beat the sample mean (Supplement~A.10).

Under least squares the design enters the formulas only through $D^*$, so a planner needs $n$, $p$, an assumed $\rho^2_{in}$ and a value for $D^*$. Under ridge it enters through two quantities, $D^*(\kappa^*)$ and $p_{eff}(\kappa^*)$, and both depend on the covariate matrix: the first through the ridge leverages, the second through the eigenvalues of $Z'Z$. When the covariates are in hand but the outcome is not --- the situation of Section~\ref{sec:illustration} --- both are computed exactly from $X$ and the assumed $\rho^2_{in}$, and nothing further is needed. When no design yet exists, both must be supplied. $D^*(\kappa^*)$ can be assumed directly, as $D^*$ is under least squares. For $p_{eff}(\kappa^*)$, we approximate the eigenvalues by those of a design whose standardized covariates are uncorrelated. Even then the eigenvalues of $Z'Z$ are not all equal: with $p$ comparable to $n$ they spread out by sampling alone, following the law of \citet{marchenko1967} (see also \citealp{dobriban2018}). Writing $\gamma = p/(n-1)$ and $k = \kappa^*/(n-1)$, that law gives the closed form
\[
p_{eff}(\kappa^*) \;\approx\; p - \frac{n-1}{2}\left[\sqrt{(1-\gamma+k)^2 + 4\gamma k} \;-\; (1-\gamma+k)\right],
\]
a function of $n$, $p$ and the assumed $\rho^2_{in}$ alone (Supplement~A.14). Across the simulation of Section~\ref{sec:simulation} it reproduces the realized $p_{eff}(\kappa^*)$ for uncorrelated covariates at every ratio of $p$ to $n$, including $p>n$. Correlated or heavy-tailed covariates spread the eigenvalues further and lower $p_{eff}(\kappa^*)$, so for them the approximation overstates it. In the simulation the overstatement grows with collinearity, leverage and dimension; it is largest for collinear, high-leverage designs with $p>n$, where the approximation roughly doubles $p_{eff}(\kappa^*)$ (Supplement Table~A1). For a given $D^*(\kappa^*)$, the resulting threshold is therefore conservative. We use it only where no covariate matrix is assumed (Figure~\ref{fig:prospect}); the simulation of Section~\ref{sec:simulation} and the analysis of Section~\ref{sec:ill-posthoc} use each design's own eigenvalues.

As in Section~\ref{sec:ols}, $D(\kappa^*)/\Tv$ is the share of prediction variance that is estimation error rather than signal. Under least squares the results hold for any coefficient vector $\beta$ (Lemma~\ref{lem:moments}); under ridge they do not, because the fit is deliberately biased and its bias depends on $\beta$, so the isotropic prior of Theorem~\ref{thm:ridgepenalty} is needed to average over it. The PRESS shortcut \eqref{eq:press} generalizes exactly, with ridge leverage $g_i(\kappa)$ in place of $h_i$.

\begin{theorem}[Predictive reliability under ridge]
\label{thm:ridgeslope}
Let $E[r_{oos}(\kappa^*)]$ be the expected leave-one-out out-of-sample correlation of Definition~\ref{def:Sloo} evaluated at $\kappa=\kappa^*$, and let $E[R^2_{oos}(\kappa^*)]$ be the quantity of Theorem~\ref{thm:ridgeR2}. Under the isotropic prior of Theorem~\ref{thm:ridgepenalty}, applying the delta-method step of Definition~\ref{def:Sloo} and dropping $E[\bar r^{\,2}]$,
\[
E\big[r_{oos}(\kappa^*)\big] \;\approx\; \phi\big(E[R^2_{oos}(\kappa^*)]\big),\qquad \phi(x) := \frac{(n-1)x+1}{\sqrt{n\big[(n-2)x+2\big]}}.
\]
As in Theorem~\ref{thm:reliability}, the right-hand side is a lower bound in absolute value (Supplement~A.12).
\end{theorem}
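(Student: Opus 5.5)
The plan is to follow the route used for Theorem~\ref{thm:reliability}. First apply the first-order delta-method step of Definition~\ref{def:Sloo} to the ratio, replacing $E[S_{y\hat y}/\sqrt{S_{yy}S_{\hat y\hat y}}]$ by $E[S_{y\hat y}]/\sqrt{E[S_{yy}]\,E[S_{\hat y\hat y}]}$. Then substitute the moments of Lemma~\ref{lem:ridgemoments} at $\kappa=\kappa^*$, dropping the term $n\,E[\bar r^{\,2}]$. The whole proof then reduces to rewriting these three moments in terms of the single quantity $x:=E[R^2_{oos}(\kappa^*)]=1-E[\mathrm{PRESS}]/E[S_{yy}]$ from Theorem~\ref{thm:ridgeR2}.

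Write $S:=E[S_{yy}]=(n-1)\Tv$ and $A:=E[S_{y\hat y}]$. By definition of $x$ we have $E[\mathrm{PRESS}]=(1-x)S$. The last identity of Lemma~\ref{lem:ridgemoments}, with $E[\bar r^{\,2}]$ dropped, then gives
\[
E[S_{\hat y\hat y}]\approx 2A-S+(1-x)S=2A-xS,
\]
so the approximation becomes $A/\sqrt{S(2A-xS)}$. It therefore suffices to show that at $\kappa^*$
\[
A/S=\frac{(n-1)x+1}{n}.
\]
Granting this, $2A/S-x=[(n-2)x+2]/n$, and the ratio collapses to $\phi(x)$ directly.

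The key step, and the one I expect to need care, is the identity for $A$. Using $E[\mathrm{PRESS}]=n(\Rv+D(\kappa^*))$ and $\Tv-\Rv=\rho^2_{in}\Tv$, the target $nA=S+(n-1)(S-E[\mathrm{PRESS}])$ is equivalent to
\[
A=(n-1)\big(\rho^2_{in}\Tv-D(\kappa^*)\big).
\]
To verify this, start from the ridge moment for $E[S_{y\hat y}]$ in Lemma~\ref{lem:ridgemoments}. Theorem~\ref{thm:ridgepenalty} gives $\kappa^*=p\Rv/(\rho^2_{in}\Tv)$, so the coefficient $\rho^2_{in}\Tv\kappa^*/p$ equals $\Rv$ exactly. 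Then
\[
A=\rho^2_{in}\Tv(n-1)-\Rv\Big[\sum_i\frac{g_i-1/n}{1-g_i}+1\Big],
\]
with $g_i=g_i(\kappa^*)$. Write $1/(1-g_i)=1+g_i/(1-g_i)$. The bracket becomes
\[
\sum_i\frac{g_i}{1-g_i}-\frac1n\sum_i\Big(1+\frac{g_i}{1-g_i}\Big)+1=\Big(1-\frac1n\Big)\sum_i\frac{g_i}{1-g_i}.
\]
By Theorem~\ref{thm:ridgeeff}, $\Rv$ times this equals $(n-1)D(\kappa^*)$, which gives the required form of $A$. This is where the choice $\kappa=\kappa^*$ is essential: at other penalties the bias term in $A$ does not match the one in $D(\kappa)$, and no such one-variable collapse occurs.

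Finally, the lower-bound remark follows as in Supplement~A.12 for least squares. The dropped term $-n\,E[\bar r^{\,2}]\le 0$ enters only $E[S_{\hat y\hat y}]$, so omitting it enlarges the denominator and shrinks the right-hand side in absolute value. This is also why the sign of $\phi$ tracks that of $(n-1)x+1$, i.e.\ of $A$.
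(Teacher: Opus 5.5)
Your proposal is correct and follows essentially the same route as the paper. You apply the delta-method ratio to the moments of Lemma~\ref{lem:ridgemoments}, and you observe that $\rho^2_{in}\Tv\kappa^*/p=\Rv$, which together with Theorem~\ref{thm:ridgeeff} gives $E[S_{y\hat y}]=(n-1)\big(\rho^2_{in}\Tv-D(\kappa^*)\big)$, i.e.\ $E[S_{y\hat y}]/E[S_{yy}]=\{(n-1)x+1\}/n$; the algebra collapsing the ratio to $\phi(x)$ checks out, and the lower-bound remark follows correctly from the sign of the dropped $-n\,E[\bar r^{\,2}]$.
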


\noindent\emph{Proof in Supplement~A.11.}

The ordering of $E[R^2_{oos}(\kappa^*)]$ and $E[r_{oos}(\kappa^*)]^2$ is the same as under ordinary least squares:
\[
E[r_{oos}(\kappa^*)]^2 - A \;=\; \frac{(1-A)^2}{n\big[(n-2)A+2\big]}, \qquad A := E[R^2_{oos}(\kappa^*)],
\]
non-negative whenever $E[R^2_{oos}(\kappa^*)] > -2/(n-2)$ --- effectively always --- and zero only at $E[R^2_{oos}(\kappa^*)]=1$. Under ridge the gap is small: for $E[R^2_{oos}(\kappa^*)] \ge 0$ it decreases in $E[R^2_{oos}(\kappa^*)]$ and is at most $1/(2n)$, reached at the break-even point. Under least squares the corresponding gap at break-even is about $\rho^2_{in,min}/2$ (Corollary~\ref{cor:halfcalib-ols}). Because ridge at $\kappa^*$ is the best linear unbiased predictor, its held-out predictions are close to calibrated, so the two statistics nearly agree; overfit least-squares predictions are not, so they can diverge widely.

\section{Simulation Study}
\label{sec:simulation}

Sections~\ref{sec:ols} and~\ref{sec:ridge} provided closed-form expressions for the expectations $E[R^2_{oos}]$ and $E[r_{oos}]$ under leave-one-out cross-validation, taken over the sampling distribution of the outcome with $X$ held fixed. These closed-form expressions are first-order delta method approximations, and, for ridge, require assumptions on the distribution of the $\beta$s. 

In the simulation study, we therefore probe three questions. The first is one of \emph{range}: whether the formulas remain accurate as $p+1$ approaches, and then exceeds, $n$ --- the region in which $D_{even}$ diverges under OLS. The second is one of \emph{assumptions}: Theorem~\ref{thm:ridgepenalty} obtains $\kappa^*$ by treating the $p$ coefficients as an isotropic draw, and Theorems~\ref{thm:ridgeeff} and~\ref{thm:ridgeR2} rely on that assumption, so we ask how much accuracy is lost when it is violated by a known amount --- separating a violation of it from collinearity in the design. The third is one of \emph{use}: whether $\kappa^*$, which a planner can compute before collecting data, is a good approximation to what cross-validation would find once outcome data is available and what the identities imply when a different estimator (lasso) would fit the data better than ridge does.

\subsection{Design of the simulation}
\label{sec:simdesign}

Each replicate draws an $n\times p$ covariate matrix with rows from $N(0,\Sigma)$, $\Sigma$ compound-symmetric with off-diagonal $\rho_c$. In the \emph{high-leverage} conditions, a random $10\%$ of rows are then multiplied by $4$, producing a heavy-tailed design in which those units are extreme on every covariate and carry most of the leverage. Each column is then standardized to sample mean zero and variance one, so the inflation changes the shape of the covariate distribution but not its variance. Coefficients follow one of three signal structures --- dense, exact-sparse with support $s$ or near-sparse --- and are then rescaled so that $\beta'\Sigma\beta = \rho^2_{in}\Tv$ exactly in every replicate, using population quantities only. Outcomes are $y_i = x_i'\beta+e_i$ with $e_i\sim N(0,(1-\rho^2_{in})\Tv)$ and $\Tv=1$. Supplement~B.1 gives the construction in full.

The rescaling ensures that $\rho^2_{in}$ is held constant (for different stated values). It also results in a test of ridge's assumptions. When $\Sigma\ne I$, drawing $\beta$ and then rescaling it tilts the coefficient distribution away from the leading directions of $\Sigma$, so the collinear conditions violate the isotropic prior of Theorem~\ref{thm:ridgepenalty} by construction. The exception is a single nonzero coefficient ($s=1$): its $\beta'\Sigma\beta$ does not depend on its position or sign, so the rescaling cannot tilt it. The collinear conditions with $s=1$ therefore separate collinearity in the design from a tilted prior (Supplement~B.1).

Overall, we examine $n\in\{30,60,250,600\}$ crossed with $(p+1)/n\in\{0.1,0.5,0.9,2.0\}$ (omitting two combinations at $n=30$ that leave too few covariates or residual degrees of freedom; Supplement~B.1), $\rho^2_{in}\in\{0.1,0.3,0.5,0.7,0.9\}$, the three signal structures, a sparsity ladder scaled with $p$ and four leverage/collinearity combinations: $1{,}920$ cells, $1{,}360$ admitting OLS and $560$ with $p+1>n$, at $R=1{,}000$ replicates each (two cells have $999$, after one failed replicate each). Averaged within cells, realized $D^*$ spans $0.31$--$0.996$ and $p_{eff}(\kappa^*)/p$ spans $0.019$--$0.990$; that movement is driven by leverage and dimension. Because each row of the grid's covariates is a linear transformation of a vector whose distribution does not involve $\Sigma$ --- Gaussian or Gaussian with a random scale in the high-leverage conditions --- $D^*$ there is invariant to $\Sigma$. Every design-stage quantity uses $n$, $p$, $\rho^2_{in}$ and the realized $D^*$; every realized quantity is the leave-one-out statistic of Definition~\ref{def:Sloo} computed from $(y,\hat y_{-i})$ alone via the exact PRESS identity~\eqref{eq:press}. We supplement these with two additional runs: all $1{,}920$ cells at $R=250$ with a $65$-point sweep of $\kappa$ and a cross-validated ridge penalty and a $144$-cell sparsity subgrid at $R=500$ carrying a cross-validated lasso (Supplement~B.1).

For each formula-versus-realized pair we form the per-replicate difference $d=\text{formula}-\text{realized}$, average it within a cell and report the median across cells of that per-cell mean, signed and in absolute value, alongside the Monte Carlo standard error (MCSE) computed on $d$. ``Realized'' means the average-of-ratios estimand of Definition~\ref{def:Sloo}. Comparisons between two \emph{realized} quantities (ridge at $\kappa^*$ against ridge at a cross-validated penalty; ridge against lasso) are formed as differences \emph{within} a replicate before averaging, so that the sampling variation the two share cancels.

\subsection{Results}

We provide a brief summary of the results here, focused on five high-level take-aways; full results by sample size, dimension ratio, penalty and signal structure are in Supplement~B. 

\textbf{1. The design formulas for $E[R^2_{oos}]$ and $E[r_{oos}]$ for both OLS and ridge track realized performance.} Figure~\ref{fig:simacc} plots formula against realized value, one point per cell. Panel~(a) spans more than four orders of magnitude in $R^2_{oos}$ --- from $-29$ to $+0.89$ in the near-saturated designs, hence the log-scale --- and the formula tracks the realized value along all of it. Table~B1 gives errors by sample size: at $(p+1)/n\le0.5$, median absolute error falls from $0.029$ to $0.001$ for the OLS $E[R^2_{oos}]$ formula between $n=30$ and $n=600$ and from $0.018$ to $0.002$ for its ridge counterpart. At $n=600$ every entry is within three times the median per-cell Monte Carlo standard error of its column, so the remaining discrepancy is of the order of Monte Carlo error at $1{,}000$ replicates.

\begin{figure}[tbp]
\centering
\includegraphics[width=0.92\textwidth]{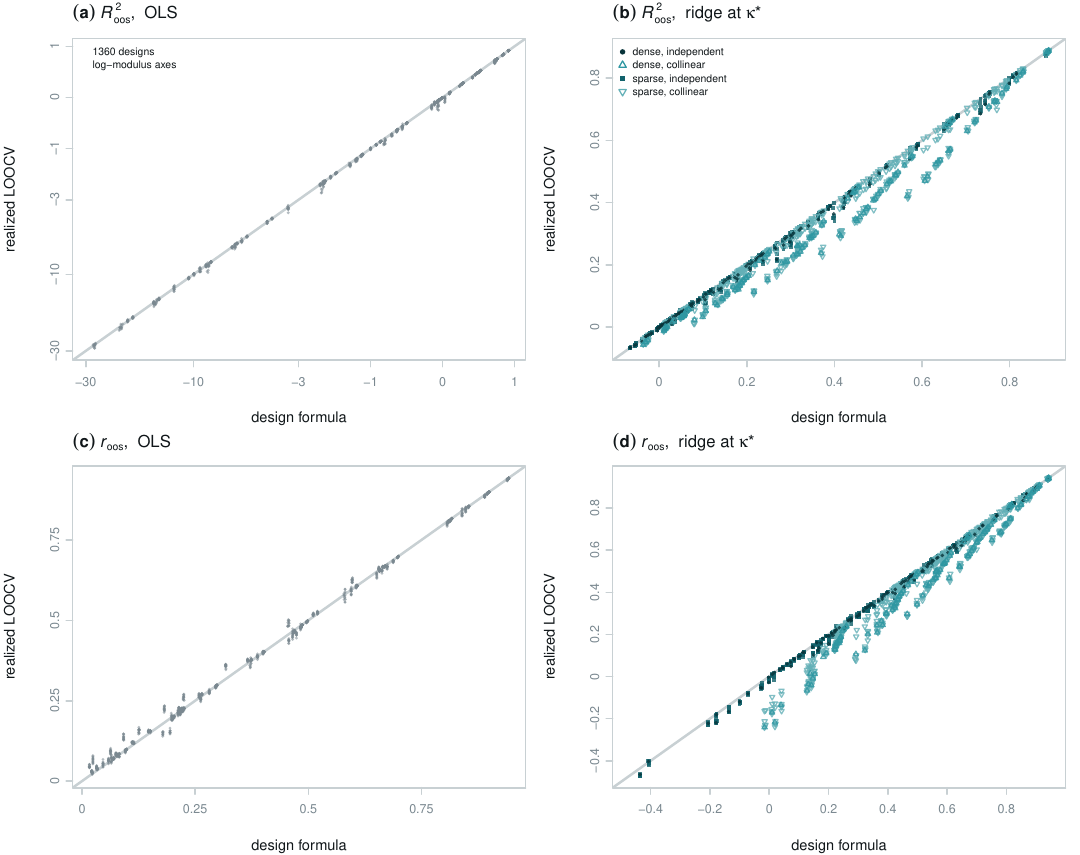}
\caption{Design formula against realized leave-one-out value, one point per cell, $1{,}920$ cells at $R=1{,}000$. Panels (a) and (c): OLS ($1{,}360$ cells; the remaining $560$ have $p+1>n$, where OLS is undefined). Panels (b) and (d): ridge evaluated at $\kappa^*$, with symbols marking the signal structure and whether the covariates are collinear. The independent designs (dark filled symbols: circles dense, squares sparse) lie on the diagonal; the departures are among the collinear designs (light open triangles), where rescaling to hold $\beta'\Sigma\beta$ fixed tilts the coefficient distribution away from isotropy. Panel (a) uses a log-scale, $\mathrm{sign}(x)\log_{10}(1+|x|)$, because realized $R^2_{oos}$ reaches $-29$ in the near-saturated cells.\label{fig:simacc}}
\end{figure}

\textbf{2. The ridge design formulas perform well, even when $p>n$; the OLS formula degrades as $p \to n$.} Binning by dimension ratio (Figure~B1), OLS $E[R^2_{oos}]$ formula error rises thirty-seven-fold, from $0.0015$ at $(p+1)/n=0.1$ to $0.056$ at $0.9$, and OLS is undefined beyond. The ridge formula varies by at most a factor of $4.4$ over the same range and continues into $p+1=2n$ at $0.012$, not far from the $0.008$ of the OLS formula at $(p+1)/n=0.5$. Because $p_{eff}(\kappa^*)$ stays well below $p$, it keeps describing realized performance where the OLS formula is undefined. The OLS degradation is anticipated by $D_{even}$ diverging as $p+1\to n$.

\textbf{3. Neither sparsity nor collinearity in the design damages the ridge design formulas; a coefficient distribution tilted away from isotropy does.} With independent predictors the prior holds in every condition, and the error is negligible whatever the signal structure: $0.003$ for dense signals and for sparse signals with $s>1$ and at most $0.001$ with $s=1$ (Table~B5). At $\rho_c=0.6$ the error rises to $0.041$ for dense signals and $0.042$ for sparse signals with $s>1$, where the rescaling tilts the coefficient distribution, but stays at $0.001$ for exact-sparse signals with $s=1$, where it cannot (\S\ref{sec:simdesign}): those designs are just as collinear, but the prior holds. The median errors are positive in every group, and the errors are positive in $57$--$100\%$ of cells, so where the formula is inaccurate it is optimistic about ridge: negligibly when the prior holds and by about $0.04$ when it does not --- small on the $R^2$ scale, but large relative to the formula's accuracy elsewhere.

\textbf{4. The design calculated $\kappa^*$ sits at the peak of the average out-of-sample curve, though cross-validation does slightly better.} Figure~\ref{fig:simkappa} sweeps the penalty from $\kappa^*/256$ to $256\kappa^*$, each cell's curve expressed as a loss from its own peak so cells of different signal strength can be averaged. The curves peak at $\kappa^*$: each cell's averaged curve peaks exactly at $\kappa^*$ in $49\%$ of cells and within one grid step in $69\%$, and the $E[R^2_{oos}]$ given up by sitting at $\kappa^*$ rather than at the summit has median $0.00001$ and maximum $0.011$.

Against a cross-validated penalty, $\kappa^*$ does slightly worse (Table~B2). Cross-validation lands close to $\kappa^*$ --- refitted to the full sample, its median penalty is between $0.99$ and $1.06$ times $\kappa^*$ at every dimension ratio --- and gains between $0.003$ and $0.025$ in $R^2_{oos}$ on average ($0.016$ over all cells), about ten Monte Carlo standard errors in each case; in no cell is $\kappa^*$ better by more than about $0.0001$. The size of the gain is set mainly by the sample size rather than by the dimension ratio: it averages $0.048$ at $n=30$, $0.022$ at $n=60$ and less than $0.01$ at $n=250$ and $600$. This is the pattern to expect if $\kappa^*$, optimal on average over coefficient vectors drawn from the prior, is outperformed by a penalty tuned to the coefficients at hand; part is optimism, as the held-out outcomes helped choose it (Supplement~B.1). The comparison therefore supports $\kappa^*$ for what it is used for here, planning a study before any outcome exists, rather than as a substitute for cross-validation once outcomes are in hand.

\begin{figure}[tbp]
\centering
\includegraphics[width=0.58\textwidth]{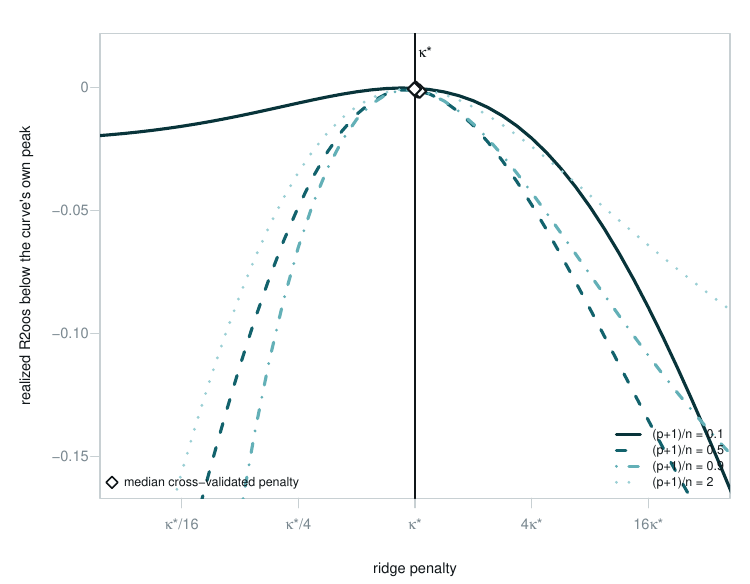}
\caption{Realized leave-one-out $R^2_{oos}$ as the ridge penalty is swept from $\kappa^*/256$ to $256\kappa^*$, averaged within dimension ratio, each cell's curve expressed as a loss from its own peak. The vertical rule is $\kappa^*$; diamonds mark the median cross-validated penalty, refitted to the full sample. The curve is asymmetric, so overshooting is more costly than undershooting; cross-validation selects penalties close to $\kappa^*$ at every dimension ratio.}\label{fig:simkappa}
\end{figure}

\textbf{5. When ridge is the wrong model, the design formula for $E[R^2_{oos}]$ becomes a lower bound.} The formulas describe what \emph{ridge} attains; if the truth is sparse, an estimator matched to sparsity (lasso) does better and the formula understates the expected values. Against ridge at $\kappa^*$ it is accurate to $+0.024$ or better under every signal structure. Against lasso the sign depends entirely on sparsity: under a dense signal the formula exceeds what lasso realizes by $0.048$; under an exactly sparse signal with $s=1$ it falls $0.162$ \emph{below}, in $91\%$ of individual replicates (Table~B3). Intermediate structures interpolate monotonically, and collinearity erodes the effect, the $s=1$ gap narrowing from $0.219$ at $\rho_c=0$ to $0.104$ at $\rho_c=0.6$. This gives the formulas a useful asymmetry: under a diffuse signal they estimate attainable performance, under a sparse one they give a lower bound on it. 

\section{An Illustration: Predicting Personality from Smartphone Behavior}
\label{sec:illustration}

We illustrate the full framework on a published prediction study. \citet{stachl2020} asked whether Big Five personality traits can be predicted from passively logged smartphone behavior. The PhoneStudy dataset contains $624$ participants, each contributing a single row of data: 1{,}821 behavioral covariates aggregated over several weeks of logging, together with self-reported personality scores. We use the data release of \citet{pargent2023}, who made these data the running example of a machine-learning tutorial, and follow their analysis choices so that our results can be compared directly with theirs. As they did, we predict the sociability facet of extraversion (the trait \citet{stachl2020} predicted most accurately) and use all $1{,}821$ covariates, replacing missing values by the covariate's median. We also follow them in excluding the four participants who did not report their gender, which leaves $n=620$. On the full set of covariates, least squares fits the data exactly in sample ($R^2=1.00$) and, under cross-validation, predicts far worse than the sample mean \citep{pargent2023}. Our formulas describe leave-one-out rather than $k$-fold cross-validation. Because least squares is not defined for the full set of covariates, where $p+1>n$, we also take nested subsets of the first $p=25$, $100$ and $600$ covariates of one random ordering. Supplement~C gives the details of the data preparation and of each step below. Every design quantity in this section can be computed with \texttt{predpower}, which includes these PhoneStudy designs as an example dataset.

\subsection{{The design efficiency $D^*$ and how to improve it}}
\label{sec:ill-dstar}

Table~\ref{tab:two} reports $D^*$ for each design. Leverage is severely uneven: most covariates are highly skewed, and $D^*$ runs from $0.18$ (at $p=600$) to $0.70$ (at $p=25$), so these designs realize between about a sixth and seven-tenths of the predictive efficiency an evenly balanced design of the same size would deliver.

Part of the problem is the imputation. Many covariates are mostly missing --- $219$ for more than $90\%$ of participants --- and median imputation turns such a covariate into a block of identical values with a few observed values far from it, the configuration that concentrates leverage on those few participants. A design analysis run before fitting would have flagged this.

Leverage concentrated by skew suggests transforming the covariates. For the $p=100$ design, $D^*=0.346$ on the raw scale rises to $0.871$ under a log transform and to $0.959$ under a rank-normal transform, and the largest leverage falls from $0.993$ to $0.508$. Supplement~C defines the transforms and shows the mechanism on the most skewed covariate.

Under ridge the design enters through $D^*(\kappa^*)$ and $p_{eff}(\kappa^*)$ instead. Because $\kappa^*$ depends on $\rho^2_{in}$, both must be evaluated at an assumed value; Table~\ref{tab:two} reports them at each design's ridge break-even point, which needs no outcome data. Shrinkage evens out leverage: $D^*(\kappa^*)$ is at least $0.95$ in every design, including the raw $p=100$ design, where it is $0.970$ against a least-squares $D^*$ of $0.346$. It also reduces the effective number of parameters sharply, to about $6$, $13$ and $24$ at $p=25$, $100$ and $600$ and to $30$ for all $1{,}821$ covariates. Ridge therefore gains little from transforming the covariates: on the rank-normal scale $D^*(\kappa^*)$ rises only to between $0.995$ and $0.997$.

\subsection{{Prospective use}}
\label{sec:ill-prospective}

None of the analyses so far has used the outcome. The remaining design parameter, $\rho^2_{in}$, must be supplied by the planner --- from substantive knowledge, from prior studies of the same outcome or by exploring a range --- just as an effect size is supplied in a power analysis. Figure~\ref{fig:prospect} shows three ways a planner might then use the formulas. Panel (a) plots the break-even $\rho^2_{in}$ against $D^*$ for this $n$ and $p$: at the raw $D^*=0.346$ least squares needs $\rho^2_{in}\ge0.361$, and after the log transform it needs $0.184$. Ridge needs only $\rho^2_{in}\ge0.027$, because its $D^*(\kappa^*)$ is close to one (Section~\ref{sec:ill-dstar}). Panel (b) traces $E[R^2_{oos}]$ across assumed values of $\rho^2_{in}$ for the log-transformed design; the gap below the dotted ceiling is what the design costs at this $n$. Panel (c) treats the study as not yet run and shows how $E[R^2_{oos}]$ under ridge climbs toward an assumed $\rho^2_{in}$ as $n$ grows. At $n=600$ and $p=100$, least squares would need $\rho^2_{in}\ge0.190$ to outperform the sample mean and ridge $0.042$. A planner would then need to assess whether such a value is plausible, given the outcome, the measures and the context.

\begin{figure}[tbp]
\centering
\includegraphics[width=\textwidth]{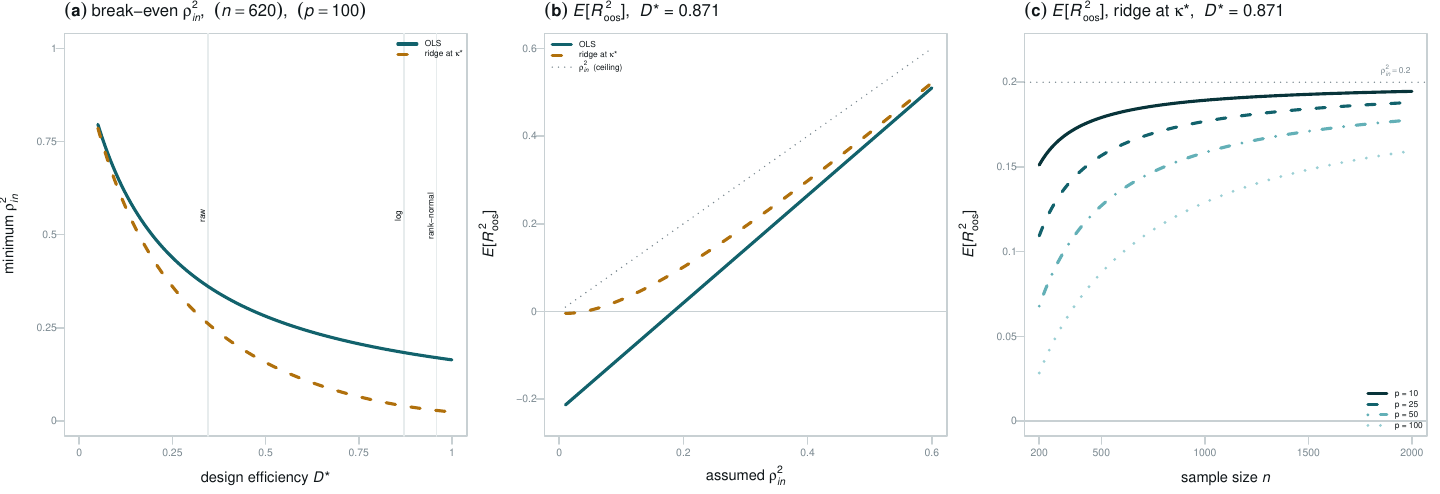}
\caption{Three prospective views, all computed from $n$, $p$, an assumed $\rho^2_{in}$ and $D^*$ alone, with $p_{eff}(\kappa^*)$ approximated as described after Corollary~\ref{cor:minridge}, so none of them uses a realized design. (a) Break-even $\rho^2_{in}$ against $D^*$ at $n=620$, $p=100$, read as $D^*(\kappa^*)$ for the ridge curve; the vertical rules mark the least-squares $D^*$ on the raw, log and rank-normal scales of Section~\ref{sec:ill-dstar} (the raw design's $D^*(\kappa^*)$ near its ridge threshold is $0.970$). (b) $E[R^2_{oos}]$ against assumed $\rho^2_{in}$ at the log-transformed $D^*=0.871$, with the ceiling $\rho^2_{in}$ dotted. (c) $E[R^2_{oos}]$ for ridge at $\kappa^*$ against $n$, at $\rho^2_{in}=0.20$ and the same log-transformed $D^*=0.871$, used for $D^*(\kappa^*)$ as well, which is conservative for ridge, for $p=10$, $25$, $50$ and $100$; the dotted line is the assumed $\rho^2_{in}$.\label{fig:prospect}}
\end{figure}

\subsection{A post-hoc analysis}
\label{sec:ill-posthoc}

Until now, we have focused on prospective use of these formulas. The PhoneStudy, however, invites a post-hoc use that we expect others with poor predictive findings may turn to: estimating $\rho^2_{in}$ from the fitted model, inserting it into the formulas and comparing the predicted $E[R^2_{oos}]$ with the realized $R^2_{oos}$. That comparison is close to circular, for the same reason post-hoc power is \citep{hoenig2001}: given $X$, the realized $R^2_{oos}$ and the in-sample $R^2_{in}$ are functions of the same residuals, by~\eqref{eq:press}, so agreement between them is not evidence that the formulas are accurate. That evidence comes from Section~\ref{sec:simulation}, where $\rho^2_{in}$ is known by construction.

A more useful comparison sets the break-even thresholds of Corollaries~\ref{cor:minR2} and~\ref{cor:minridge}, which depend only on the design, beside a confidence interval for $\rho^2_{in}$. Table~\ref{tab:two} does this for each design, on its raw scale and after the rank-normal transform. Where $p+1<n$, $\rho^2_{in}$ is estimated by the adjusted $R^2$, with an interval from the noncentral $F$ \citep{steiger1992}; the ridge threshold is found by root-finding with the design's own leverage. The realized leave-one-out values are included as description of what happened, not as a test of the formulas.

\begin{table}[tbp]
\centering
\small
\setlength{\tabcolsep}{3.5pt}
\caption{Design analysis of the PhoneStudy data ($n=620$), predicting the sociability facet of extraversion from the first $p$ covariates of one random ordering and from all $1{,}821$, on their raw scale and after a rank-normal transform. $D^*$ uses the covariates alone. The break-even columns give the $\rho^2_{in}$ each estimator needs to beat the sample mean; under ridge the penalty $\kappa^*$ depends on $\rho^2_{in}$, so $D^*(\kappa^*)$ and $p_{eff}(\kappa^*)$ are reported at that break-even value. None of these uses outcome data. $\hat\rho^2_{in}$ is the adjusted in-sample $R^2$, with a $95\%$ confidence interval from the noncentral $F$, truncated at $0$; when $p+1>n$ ($\dagger$) it is the REML estimate under the ridge prior, with a $95\%$ profile-likelihood interval (Section~\ref{sec:ill-posthoc}). The last two columns are realized leave-one-out values, ridge at the $\kappa^*$ implied by $\hat\rho^2_{in}$, which is undefined when $\hat\rho^2_{in}\le0$. A dash marks a quantity that is undefined, not one that is missing.}\label{tab:two}
\begin{tabular}{r l r r rrr r@{\,}l rr}
\toprule
& & \multicolumn{1}{c}{design} & \multicolumn{4}{c}{break-even $\rho^2_{in}$} & \multicolumn{2}{c}{} & \multicolumn{2}{c}{realized $R^2_{oos}$} \\
\cmidrule(lr){3-3}\cmidrule(lr){4-7}\cmidrule(lr){10-11}
& & & OLS & \multicolumn{3}{c}{ridge} & \multicolumn{2}{c}{} & & \\
\cmidrule(lr){5-7}
$p$ & scale & $D^*$ & needs & needs & $D^*(\kappa^*)$ & $p_{eff}(\kappa^*)$ & \multicolumn{2}{c}{$\hat\rho^2_{in}$ [95\% CI]} & OLS & ridge \\
\midrule
25 & raw & 0.696 & 0.061 & 0.014 & 0.957 & 6.4 & 0.058 & [0.023, 0.105] & $-0.009$ & $0.032$ \\
 & rank-normal & 0.984 & 0.044 & 0.013 & 0.997 & 6.0 & 0.059 & [0.023, 0.105] & $0.017$ & $0.035$ \\
100 & raw & 0.346 & 0.361 & 0.025 & 0.970 & 12.8 & 0.134 & [0.073, 0.199] & $-0.171$ & $0.079$ \\
 & rank-normal & 0.959 & 0.170 & 0.021 & 0.996 & 11.2 & 0.173 & [0.109, 0.239] & $0.004$ & $0.117$ \\
600 & raw & 0.177 & 0.994 & 0.043 & 0.983 & 24.1 & 0.048 & [0.000, 0.448] & $-155$ & $0.051$ \\
 & rank-normal & 0.722 & 0.978 & 0.034 & 0.996 & 19.3 & $\le 0$ & [0.000, 0.408] & $-42.0$ & --- \\
1{,}821 & raw & --- & --- & 0.053 & 0.984 & 30.2 & $0.221^\dagger$ & [0.133, 0.324] & --- & $0.088$ \\
 & rank-normal & --- & --- & 0.040 & 0.995 & 22.6 & $0.262^\dagger$ & [0.170, 0.368] & --- & $0.111$ \\
\bottomrule
\end{tabular}
\end{table}

For least squares the comparison is decisive at $p=100$ and above, where the threshold lies above the entire interval for $\rho^2_{in}$: at $p=100$ the design needs $\rho^2_{in}\ge0.361$, while the interval ends at $0.199$. For every value of $\rho^2_{in}$ the data support, then, this design could not be expected to beat the sample mean, and the negative realized value ($-0.171$) is what that implies. The problem is not that the predictors are uninformative: the interval puts the share of outcome variance they explain in the population between $7\%$ and $20\%$. It is that, with this $n$, $p$ and $D^*$, a least-squares fit cannot be expected to turn that information into out-of-sample accuracy. The rank-normal transform lowers the threshold at $p=100$ from $0.361$ to $0.170$, inside the interval.

For ridge the picture is different. Because shrinkage evens out leverage and reduces the effective number of parameters (Section~\ref{sec:ill-dstar}), the ridge thresholds are low, between $0.013$ and $0.053$. At $p=25$ and $p=100$ they lie below the lower confidence limits, so ridge is expected to beat the sample mean for every value of $\rho^2_{in}$ the data support; the realized values are $0.032$ and $0.079$. A cross-validated lasso gives similar values (Supplement~C).

The realized correlation tells a different story: at $p=100$ the least-squares predictions correlate $0.24$ with the held-out outcomes even though $R^2_{oos}=-0.171$. A correlation is unchanged if every prediction is stretched or shrunk toward the mean, so it shows that these predictions order participants sensibly but on far too wide a scale --- which ridge's shrinkage largely corrects ($r_{oos}=0.28$, $r^2_{oos}=0.079$, against $R^2_{oos}=0.079$). Supplement~C shows the same pattern across $p$.

For the full set of covariates, the model of \citet{pargent2023}, ridge requires $\rho^2_{in}\ge0.053$. With $p>n$ the adjusted $R^2$ is not defined, so we estimate $\rho^2_{in}$ under the same isotropic prior that defines $\kappa^*$ (Theorem~\ref{thm:ridgepenalty}). With the design held fixed, the outcome then has covariance $\nu^2 ZZ' + \Rv I$, and the two variance components can be estimated by restricted maximum likelihood (REML; \citealp{patterson1971}), as in estimates of the share of trait variance explained by genome-wide markers, where $p$ also far exceeds $n$ \citep{yang2010}. This gives $\hat\rho^2_{in}=0.221$ $[0.133, 0.324]$, well above the ridge threshold, and the realized value, $0.088$, is consistent with it. The estimate rests on the prior, and Supplement~C checks it; a bound that does not use the prior to estimate $\rho^2_{in}$ gives the same conclusion, since the $\rho^2_{in}$ of all $1{,}821$ covariates is at least that of the first $100$, whose lower confidence limit, $0.073$, already exceeds the threshold.

Finally, \citet{stachl2020} gauged their out-of-sample correlations against effect-size benchmarks, and the design formulas show why such a comparison mixes two quantities. The adjusted in-sample $R^2$, and the REML estimate when $p+1>n$, estimate $\rho^2_{in}$, a population quantity whose meaning does not change with $n$. $E[R^2_{oos}]$ is a function of $\rho^2_{in}$ and of $n$, $p$ and $D^*$ --- closer to statistical power than to a population parameter. In the PhoneStudy the $1{,}821$ covariates are estimated to explain about $22\%$ of the variance of sociability, while ridge recovers $0.088$ out of sample. Both numbers are correct, but they answer different questions. The first, by no means small, describes the relationship between smartphone behavior and sociability and is the quantity to set beside effect sizes elsewhere. The second describes what this design delivers: a larger sample would recover more of the $22\%$, as \citet{stachl2020} anticipated, and the design formulas say how much (Figure~\ref{fig:prospect}c). Whether an out-of-sample $R^2$ of this size is large enough to be useful for any decision made from these predictions remains a separate question, one the design formulas cannot answer and to which we return in Section~\ref{sec:discussion}.

\section{Discussion}
\label{sec:discussion}

In this paper, we provide formulas for the approximate expected out-of-sample accuracy $E[R^2_{oos}]$ and $E[r_{oos}]$. We show that these are determined by four quantities: the sample size $n$, the number of predictors $p$, the population $R^2$ ($\rho^2_{in}$) and the design efficiency factor $D^*$. The resulting formulas can be used to help guide the design of studies for prediction and for helping researchers better understand how these design decisions affect the performance of predictive models. 

\subsection{Relation to literature on out-of-sample statistics}

The findings of this paper are closest to those of \citet{zevallos2026}, who derive the expected out-of-sample $R^2$ under rolling- and expanding-window re-estimation. Their focus is on the equity-premium forecasting problem, and their approach is similar to ours (using the same delta-method identity). Their paper, however, does not explore the effect of uneven leverage (there, observations across time) via a design parameter, whereas we quantify this via $D^*$. It may be possible to extend our results to theirs, though we have not done so in this paper. 

For those conducting analyses with outcome data in hand, our paper also provides an approach to improve predictability through improvements to the design. In the illustration we show that by transforming the covariates in $X$, it is possible to increase the least-squares design efficiency $D^*$ and, as a result, increase both the design-stage $E[R^2_{oos}]$ and $E[r_{oos}]$ and their realized counterparts $R^2_{oos}$ and $r_{oos}$. Under ridge the gain through the design is small, because shrinkage already evens out leverage: $D^*(\kappa^*)$ was close to one before any transform. We also provide a closed-form ridge penalty $\kappa^*$. Our focus is on its use in planning: in the simulation it gave up at most $0.011$ in $R^2_{oos}$ against the best penalty in hindsight, and a cross-validated penalty did slightly better, by $0.016$ on average and mostly in small samples (Section~\ref{sec:simulation}, result~4).

\subsection{Implications for scientists}

These formulas show that, in general, $E[R^2_{oos}] \leq \rho^2_{in}$ and that the distance between the two depends upon $n$, $p$ and $D^*$. In the language of science, this means that the ability to \emph{predict} well to new units outside your sample is bounded by how well the predictors \emph{explain} the outcome in the population. The classical shrinkage formulas (Section~\ref{sec:existing}) already imply this for refit-and-replicate prediction; what the formulas here add is that, under leave-one-out, the gap depends on how the covariates are arranged, through $D^*$, and not on $n$ and $p$ alone. This suggests that prediction is best attempted where explanatory power, $\rho^2_{in}$, is already strong and well understood.

The formulas also clarify that both $R^2_{oos}$ and $r_{oos}$ are sample size dependent statistics --- they are values that are conditional on the sample size of training data. This is fundamentally different from $\rho^2_{in}$, a population quantity that does not depend on $n$. For scientists, this means that $\rho^2_{in}$ is a property of the population, which they collect data in a sample to understand. Larger samples lead to more precise inferences about $\rho_{in}^2$ but do not change the parameter itself. In comparison, the out-of-sample analogue, $E[R^2_{oos}]$ is fundamentally tied to properties of the sample. A different sample, of a different size, from the same population would result in the same $\rho^2_{in}$ but a different $E[R^2_{oos}]$. The formulas make this distinction explicit.

\subsection{Limitations}

The framework targets leave-one-out prediction specifically; the refit-and-replicate target yields a different and simpler answer (see the discussion after Theorem~\ref{thm:ols}). At the design stage, $\rho^2_{in}$ must be supplied rather than estimated, and, like the choice of parameters in a power analysis, the accuracy of the results depends upon it being selected well. 

The formulas give the expectations of $R^2_{oos}$ and $r_{oos}$, not their sampling distributions. A design that meets the break-even threshold is expected to beat the sample mean but will do so in only about half of samples; deciding how far above the threshold a design must sit to do so reliably requires the variance of these statistics as well, for which the delta-method standard error of \citet{hawinkel2024} is a natural starting point.

The break-even threshold asks only whether a design can be expected to predict better than the sample mean; it does not say how large an out-of-sample $R^2$ must be to be worth having. Fields that predict routinely have their own answers --- clinical prediction sets sample sizes so that the expected shrinkage of the fitted model is small \citep{riley2019a}, and statistical genetics judges a polygenic score against the heritability of the trait \citep{daetwyler2008,dudbridge2013} --- but the social sciences do not yet have a comparable standard. Developing one, and expressing it as a target the design formulas can be solved for, is a question for future research.

Among penalized estimators, only ridge is treated, for reasons of analytic tractability. The simulation shows the accuracy of these formulas when the underlying assumptions are not met (Section~\ref{sec:simulation}, results~3 and~5). Under a diffuse signal the ridge formula overstates what the lasso attains, by $0.048$; under a genuinely sparse one the lasso does better, by $0.162$ at $s=1$, and the formula is then a lower bound on attainable accuracy rather than an estimate of it. In the illustration the two estimators land close (Section~\ref{sec:ill-posthoc}), consistent with a diffuse signal.

The results also assume the covariates themselves are measured without error. \citet{pratiwi2024} study a closely related design factor --- the reliability of the predictors, in the classical test-theory sense of \citet{lordnovick1968} --- for out-of-sample incremental predictive validity and find that predictor unreliability attenuates predictive validity by a mechanism distinct from collinearity or leverage and that ridge's advantage over ordinary least squares is largest precisely in the small-sample, low-reliability regime this framework is meant to inform. Extending $\rho^2_{in}$ and $D^*$ to account for a known or assumed reliability of the covariates --- relevant for the PhoneStudy application, where the behavioral covariates are themselves noisy proxies rather than clean measurements --- is a natural next step and is not attempted here.

\section{Conclusion}

Design calculations for statistical power have given researchers decades of intuition about when a study is large enough to detect an effect. No corresponding intuition exists for prediction in the kinds of data and models found in the social sciences, and in its absence a disappointing out-of-sample $R^2$ is difficult to interpret: it may indicate that the phenomenon is unpredictable, that the measures were poor or that the design ($n$, $p$, $D^*$) was just not able to support prediction in the first place. This paper provides a closed-form assessment of the third, before any outcome is observed. These formulas make clear how prediction and explanation are related and how better design of a study can facilitate better predictive modeling. 

\bigskip
\begin{center}
{\large\bf ACKNOWLEDGMENTS}
\end{center}

\noindent\textbf{Use of generative AI.} Generative AI was used throughout this project as a working tool, under the author's direction. The research question, the design framework and its central quantities (the expected out-of-sample $R^2_{\mathrm{oos}}$ as the quantity to plan for, the design factor $D^*$, the break-even $\rho^2_{\min}$ and the ridge-optimal penalty $\kappa^*$), the results to be established, the design of the simulation study and the choice and framing of the illustration are the author's own. The work proceeded iteratively: the author posed each question and set each next step, and the AI tools worked through derivations, code and text, which the author checked, corrected, rejected or revised. Claude (Anthropic) was used through claude.ai in the early stages of the project (principally Claude Opus 4.8) to work through derivations and proofs with the author and to draft early notes and code. In later stages, Claude Code (Claude Opus 5 and Claude Opus 5.5) was used to continue working through derivations and proofs with the author; to write and audit R code for the simulation study, the illustration and the accompanying R package; to search and summarize literature; to check reported numbers against the code; and to draft and edit text. ChatGPT (GPT-6, OpenAI) was used to obtain comments on a draft. The author reviewed, tested and revised all output, verified every derivation and reported number and takes full responsibility for the content of the article.

\bigskip
\begin{center}
{\large\bf SUPPLEMENTARY MATERIAL}
\end{center}

\begin{description}
\item[Supplement, Section A: Proofs.] Complete proofs of every lemma, theorem and corollary stated in the article, in the order they appear, together with the setup and supporting lemmas they share. (PDF)
\item[Supplement, Section B: Simulation, further detail.] The data-generating process and study parameters in full, additional results by sample size, dimension ratio, penalty and signal structure and the supporting tables on design efficiency and the isotropic prior. (PDF)
\item[Supplement, Section C: Application, further detail.] Further detail on the PhoneStudy analysis of the sociability facet: the effect of missing values and median imputation on the design, the REML estimate of $\rho^2_{in}$ for the full set of covariates, the mechanism by which predictor skew concentrates leverage and further detail on the post-hoc analysis, including the realized correlations. (PDF)
\item[R package:] \texttt{predpower}, implementing the design formulas for least squares and ridge: $D^*$ and $D^*(\kappa^*)$ from a covariate matrix, the expected out-of-sample $R^2$ and correlation, the break-even thresholds and solvers for the sample size or number of predictors needed to reach a given $E[R^2_{oos}]$. (R source package\if0\blind{}; \url{https://github.com/marianolake/predpower}\fi)
\item[Replication code:] R code reproducing the simulation study and the application, including the design-stage calculations for $D^*$, $\kappa^*$ and the predicted design-oriented statistics. (zip archive)
\end{description}

\bibliographystyle{apalike}

\clearpage
\foreach \supppage in {1,...,27}{%
  \thispagestyle{empty}%
  \begin{tikzpicture}[remember picture,overlay]%
    \node at (current page.center) {\includegraphics[page=\supppage]{supplement.pdf}};%
  \end{tikzpicture}%
  \clearpage}

\end{document}